\renewcommand{\backref}[1]{}
\renewcommand{\backrefalt}[4]{%
\ifcase #1 %
\or
[p.\ #2]%
\else
[pp.\ #2]%
\fi}
\newcommand{\bitz}{\{0,1\}}
\newcommand{\eps}{\varepsilon}
\newcommand{\abs}[1]{\lvert #1 \rvert}
\newtheorem{theorem}{Theorem}
\newtheorem{definition}[theorem]{Definition}
\newtheorem{lemma}[theorem]{Lemma}
\newtheorem{fact}[theorem]{Fact}
\newcommand{\ket}[1]{\vert #1 \rangle}
\newcommand{\bin}[0]{\{0,1\}}
\renewcommand{\Pr}{\mathop{\bf Pr\/}}
\newcommand{\calL}{\mathcal{L}}
\newcommand{\calH}{\mathcal{H}}
\newcommand{\norm}[1]{\lVert #1 \rVert}
\newcommand{\N}{\mathbb{N}}
\newcommand{\calT}{\mathcal{T}}
\newcommand{\EXP}{\mathsf{EXP}}
\newcommand{\BQP}{\mathsf{BQP}}
\newcommand{\DQP}{\mathsf{DQP}}
\newcommand{\PDQP}{\mathsf{PDQP}}
\newcommand{\PDQMA}{\mathsf{PDQMA}}
\newcommand{\DQMA}{\mathsf{DQMA}}
\newcommand{\NEXP}{\mathsf{NEXP}}
\newcommand{\NP}{\mathsf{NP}}
\newcommand{\QMA}{\mathsf{QMA}}
\newcommand{\MIP}{\mathsf{MIP}}
\newcommand{\PCP}{\mathsf{PCP}}
\newcommand{\ALL}{\mathsf{ALL}}
\newcommand{\F}{\mathbb{F}}
\newcommand{\verifier}{V}
\newcommand{\pcpverifier}{V_{\mathsf{PCP}}}
\newcommand{\poly}{\mathrm{poly}}
\title{
$\PDQMA = \DQMA = \NEXP$:\\$\QMA$ With Hidden Variables and Non-collapsing Measurements 
}
\author{
Scott Aaronson\thanks{University of Texas at Austin. \href{mailto:aaronson@cs.utexas.edu}{\texttt{aaronson@cs.utexas.edu}}.}
\and
Sabee Grewal\thanks{University of Texas at Austin. \href{mailto:sabee@cs.utexas.edu}{\texttt{sabee@cs.utexas.edu}}.}
\and
Vishnu Iyer\thanks{University of Texas at Austin. \href{mailto:vishnu.iyer@utexas.edu}{\texttt{vishnu.iyer@utexas.edu}}. }
\and
Simon C. Marshall\thanks{Leiden University. \href{mailto:s.c.marshall@liacs.leidenuniv.nl}{\texttt{s.c.marshall@liacs.leidenuniv.nl}}. }
\and
Ronak Ramachandran\thanks{University of Texas at Austin. \href{mailto:ronakr@utexas.edu}{\texttt{ronakr@utexas.edu}}.}
}
\date{}
\begin{document}

\maketitle

\begin{abstract}
We define and study a variant of $\mathsf{QMA}$ (Quantum Merlin Arthur) in which Arthur can make multiple non-collapsing measurements to Merlin's witness state, in addition to ordinary collapsing measurements. \ By analogy to the class $\mathsf{PDQP}$ defined by Aaronson, Bouland, Fitzsimons, and Lee (2014), we call this class $\mathsf{PDQMA}$. \ Our main result is that $\mathsf{PDQMA}=\mathsf{NEXP}$; this result builds on the $\PCP$ theorem and complements the result of Aaronson (2018) that $\mathsf{PDQP/qpoly} = \mathsf{ALL}$. \ While the result has little to do with quantum mechanics, we also show a more ``quantum'' result: namely, that $\mathsf{QMA}$ with the ability to inspect the entire history of a hidden variable is equal to $\mathsf{NEXP}$, under mild assumptions on the hidden-variable theory. \
We also observe that a quantum computer, augmented with quantum advice and the ability to inspect the history of a hidden variable, can solve any decision problem in polynomial time.
\end{abstract}
\newpage

\section{Introduction}

To understand the power of quantum computation is, in large part, to understand how that power depends on the central features of quantum mechanics itself, such as linearity, unitarity, tensor products, complex numbers, the Born Rule, or the destructive nature of measurement. \ But since quantum mechanics is usually presented as a ``package deal,'' how can we pick apart these dependencies? \ One natural approach has been to define complexity classes based on ``fantasy'' versions of quantum mechanics, which change one or more of its features, and see how they relate to the standard $\mathsf{BQP}$ (Bounded-Error Quantum Polynomial-Time). \ Some representative findings of that research program over the past few decades include:

\begin{enumerate}
\item[(1)] Quantum mechanics over the reals or quaternions leads to the same computational power as quantum mechanics over the complex numbers, despite the theories differing in other respects \cite{rudolph20022}.

\item[(2)] Quantum mechanics with a nonlinear Schr\"{o}dinger equation would generically allow $\mathsf{NP}$- and even $\mathsf{\#P}$-complete problems to be solved in polynomial time, in contrast to what is conjectured for standard (linear) quantum mechanics \cite{al}.

\item[(3)] A quantum computer with closed timelike curves could solve exactly the problems in $\mathsf{PSPACE}$, same as a classical computer with closed timelike curves \cite{awat}.

\item[(4)] Quantum computers with nonunitary linear evolution, or modifications of the Born rule (say, $|\psi|^3$), with normalization of probabilities imposed, yield at least the power of quantum computers with \textit{postselected measurement outcomes}---a model that Aaronson called $\mathsf{PostBQP}$ and proved to coincide with the classical complexity class $\mathsf{PP}$ \cite{aar:isl,aar:pp}.

\item[(5)] Generalized probabilistic theories (GPTs), a class of theories that includes quantum mechanics and classical probability theory as special cases, are as a whole characterized by the complexity class $\mathsf{AWPP}$ \cite{lee2015computation, barrett2019computational}.

\item[(6)] If we allow multiple, non-collapsing measurements of the same state---or, closely related, to see the entire history of a hidden variable as in Bohmian mechanics---we get a model of computation that seems more powerful than standard quantum computation, but only ``slightly'' so \cite{aar:qchv}. \ As examples, we can quickly find collisions in many-to-one functions (and thus, for example, solve Graph Isomorphism), and we can solve the Grover search problem in $N^{1/3}$ steps rather than $\sqrt{N}$. \ But we still seem unable to solve $\mathsf{NP}$-complete problems in polynomial time.
\end{enumerate}

Example (6) is the one of most interest to us here. \ To our knowledge, it is the only natural example known where changing the rules of quantum mechanics leads to complexity classes that appear only modestly larger than $\mathsf{BQP}$. \ Much of the power comes from the combination of non-collapsing measurements with ordinary collapsing ones. \ As an example, given a two-to-one function $f:[N]\rightarrow[M]$, the way to find collisions is simply to prepare
$$ \frac{1}{\sqrt{N}} \sum_{x\in [N]} \ket{x} \ket{f(x)}, $$
then measure the $\ket{f(x)}$ register in the ordinary way to get
$$ \frac{\ket{x} + \ket{y}}{\sqrt{2}} $$
where $f(x)=f(y)$ in the first register, and finally perform multiple non-collapsing measurements on the first register in the standard basis, until both $x$ and $y$ are observed with high probability.\footnote{The reason why non-collapsing measurement allows Grover search in $N^{1/3}$ steps is simpler and does not require us to combine collapsing with non-collapsing measurements. \ Instead, given a unique marked item out of $N$, one simply runs Grover's algorithm for $T=N^{1/3}$ iterations, thereby boosting the probability of the marked item to $\sim \frac{T^2}{N} = N^{-1/3}$, and then performs $N^{1/3}$ non-collapsing measurements so that the marked item is found with constant probability.}

In a hidden-variable theory, where the hidden variable is either at $\ket{x}$ or $\ket{y}$ with equal probabilities, the solution at this point is to ``juggle''---for example, by repeatedly applying Fourier transforms followed by their inverses. \ The goal here is to cause the hidden variable to ``forget'' whether it was at $\ket{x}$ or $\ket{y}$, so that it must eventually visit both of them with high probability. \ In such a case, if (as we're imagining) we could see the whole history of the hidden variable at once, from some godlike vantage point, we would learn both $\ket{x}$ and $\ket{y}$ and thereby solve our computational problem.

The history of these ideas is a bit tangled. \ In 2005, Aaronson defined the class $\mathsf{DQP}$ (Dynamical Quantum Polynomial-Time), to capture the problems that quantum computers could efficiently solve, if only one could examine the entire history of a hidden variable (in any hidden-variable theory that satisfies reasonable axioms, called robustness and indifference). \ He showed that $\mathsf{SZK}\subseteq \mathsf{DQP}$, where $\mathsf{SZK}$ is Statistical Zero Knowledge, basically because $\mathsf{DQP}$ could simulate non-collapsing measurements (although he didn't formalize this). \ Combined with Aaronson's quantum lower bound for finding collisions \cite{aar:col}, which implies the existence of an oracle relative to which $\mathsf{SZK} \not\subset \mathsf{BQP}$, this gives us an oracle separation between $\mathsf{BQP}$ and $\mathsf{DQP}$. \ Aaronson also showed that $\mathsf{DQP} \subseteq \mathsf{EXP}$, which has not been improved since then. \ He claimed to give an oracle relative to which $\mathsf{NP} \not\subset \mathsf{DQP}$, although his proof had a bug \cite{abfl} and the existence of such an oracle remains open. \ Then, in 2014, Aaronson, Bouland, Fitzsimons, and Lee \cite{abfl} defined the class $\mathsf{PDQP}$ (Product Dynamical Quantum Polynomial-Time), to capture non-collapsing measurements specifically. \ They showed that $\mathsf{PDQP}$ also contains $\mathsf{SZK}$, showed the upper bound $\mathsf{PDQP} \subseteq \mathsf{BPP}^{\mathsf{PP}}$, and gave a correct proof that there exists an oracle relative to which $\mathsf{NP} \not\subset \mathsf{PDQP}$.

Overall, as we said, these results painted a picture of $\mathsf{DQP}$ and $\mathsf{PDQP}$ as only modestly more powerful than $\mathsf{BQP}$. \ However, a surprising new wrinkle came in 2018, when Aaronson \cite{aar:pdqpqpoly} observed that $\mathsf{PDQP/qpoly} = \mathsf{ALL}$, where $\mathsf{/qpoly}$ means ``with polynomial-sized quantum advice,'' and $\mathsf{ALL}$ is the class of all languages. \ This stands in contrast to 2004 results of Aaronson \cite{aar:adv} that limit the power of $\mathsf{BQP/qpoly}$: namely, that $\mathsf{BQP/qpoly} \subseteq \mathsf{PP/poly}$ (later improved by Aaronson and Drucker \cite{adrucker} to $\mathsf{BQP/qpoly} \subseteq \mathsf{QMA/poly}$), and that there exists an oracle relative to which $\mathsf{NP} \not\subset \mathsf{BQP/qpoly}$. \ In other words, quantum advice and non-collapsing measurements have a ``Mentos and Coke'' character, where each one is only modestly powerful in isolation, but together they trigger a complexity-theoretic explosion.

To prove the $\mathsf{PDQP/qpoly} = \mathsf{ALL}$ result, Aaronson adapted a 2005 theorem of Raz \cite{raz:all} that $\mathsf{QIP/qpoly} = \mathsf{ALL}$, where $\mathsf{QIP}$ is the class of languages that admit quantum interactive proofs. \ In both Raz's protocol and Aaronson's, given an arbitrary Boolean function $f:\bin^n\rightarrow\bin$ that one wants to compute, one first chooses a prime $q \gg n$. \ The whole truth table of $f$ is then encoded by the quantum advice state
$$ \ket{\psi} = \frac{1}{\sqrt{q^n}} \sum_{z\in\mathbb{F}_q^n} \ket{z}\ket{p(z)}, $$
where $p:\mathbb{F}_q^n\rightarrow \mathbb{F}_q$ is the unique multilinear polynomial over $\mathbb{F}_q$ such that $p(x)=f(x)$ for all $x\in\bin^n$. \ Next, given a point of interest $x\in\bin^n$, on which one wants to evaluate $f(x)$, one measures $\ket{\psi}$ so as to collapse it to an equal superposition
$$\ket{\psi_\ell} = \frac{1}{\sqrt{q-1}} \sum_{z\in \ell \setminus \{x\} } \ket{z}\ket{p(z)} $$
over a random line $\ell \subset \mathbb{F}_q^n$ that passes through $x$, minus the point $x$ itself. \ Finally, one uses polynomial interpolation on this line to recover $p(x)=f(x)$. \ In the non-collapsing measurements model, this is done by simply measuring the state $\ket{\psi_\ell}$ over and over in the standard basis, until enough $(z,p(z))$ pairs have been observed for the interpolation to work.\footnote{Since $p$ is a multilinear extension of a Boolean function on $n$ variables, its degree is at most $n$. \ Hence, $n+1$ pairs are needed to do polynomial interpolation, so $q$ must be chosen to be at least $n+2$.}

\subsection{This Paper}

Here we show a new example where non-collapsing measurements combined with one other resource yield extraordinary computational power---vastly more power than either resource in isolation.

The class $\mathsf{QMA}$ (Quantum Merlin Arthur) is a well-known quantum generalization of $\mathsf{NP}$; it consists of all languages for which a ``yes'' answer can be verified in quantum polynomial time with the help of a polynomial-size quantum witness state. \ We define and study $\mathsf{PDQMA}$ (Product Dynamical $\mathsf{QMA}$), or $\mathsf{QMA}$ augmented with non-collapsing measurements. \ Our main result is that $\mathsf{PDQMA} = \mathsf{NEXP}$, even if $\mathsf{PDQMA}$ is defined with a completeness/soundness gap of (say) $1-2^{-n}$ vs.\ $2^{-n}$ (\cref{thm:pdqma=nexp}). 

Since the inclusion $\mathsf{PDQMA} \subseteq \mathsf{NEXP}$ is straightforward, the interesting part is $\mathsf{NEXP} \subseteq \mathsf{PDQMA}$. \ 
By the celebrated PCP theorem \cite{arora1998probabilistic,arora1998proofverification}, it suffices to show that a two-query PCP system (\cref{def:pcp}) can be simulated by $\PDQMA$. \
Our proof builds on Aaronson's proof \cite{aar:pdqpqpoly} that $\mathsf{PDQP/qpoly}  = \ALL$, but with two key differences. \ First, to simulate a two-query PCP system, we need a witness state that can support at least \emph{two} queries to an exponentially long truth table, rather than just one query. \  
Second, we now need an analysis of \textit{soundness}: why, for example, can the prover not cheat by sending a witness that causes the response to each query to depend on the other query? \ This problem seems particularly acute once we realize that, when non-collapsing measurements are allowed, we can no longer rely on the no-signaling principle of quantum mechanics. \ Fortunately, there turns out to be a simple solution: basically, the ``error-correcting code'' structure of an honest witness state prevents a cheating prover from gaining anything by correlating the query responses.

We point out some implications of our result. \ First, combining with the Nondeterministic Time Hierarchy Theorem ($\mathsf{NP} \ne \mathsf{NEXP}$), we find that $\mathsf{PDQMA}$ is \textit{unconditionally} more powerful than $\mathsf{NP}$. \ Second, ``scaling down by an exponential,'' we find that when non-collapsing measurements are allowed, the problem of optimizing an acceptance probability over all $N$-dimensional quantum states jumps from easy (a principal eigenvector problem) to $\mathsf{NP}$-hard even to approximate.

Let us place our result in the context of previous work on generalizations of $\mathsf{QMA}$. \ Aharonov and Regev \cite{ar:quantum} defined $\mathsf{QMA+}$, a variant of $\QMA$ where the verifier can directly obtain the probability a given two-outcome measurement will accept, and showed $\mathsf{QMA+} = \QMA$. \ More recently, Jeronimo and Wu \cite{jeronimo:nonnegative} showed that $\mathsf{QMA^+(2)} = \mathsf{NEXP}$, where $\mathsf{QMA^+(2)}$ is $\mathsf{QMA}$ with two unentangled proofs that have nonnegative real amplitudes. \ Bassirian, Fefferman, and Marwaha \cite{bassirian2023quantum} improved this to show that $\mathsf{QMA^+} = \mathsf{NEXP}$---i.e., nonnegative amplitudes alone suffice for the jump to $\mathsf{NEXP}$. \ As of this writing, it remains a matter of avid speculation whether unentangled witnesses \textit{also} suffice for the jump to $\mathsf{NEXP}$: that is, whether $\mathsf{QMA(2)} = \mathsf{NEXP}$. \ Of course, our result implies that it would suffice to simulate non-collapsing measurements using unentangled proofs---i.e., to show $\mathsf{PDQMA} \subseteq \mathsf{QMA(2)}$.

One important observation about our $\mathsf{PDQMA}$ protocol is that it only ever measures the witness state in the computational basis---and hence, one could say, never exploits quantum interference. \ So in particular, if we defined a complexity class $\mathsf{PDMA}$ (Product Dynamical Merlin-Arthur) in mathematical parallel to $\mathsf{PDQMA}$, where the witness was a classical probability distribution $\mathcal{D}$, and one was allowed to sample from $\mathcal{D}$ with or without doing Bayesian updating to it, we would equally have $\mathsf{PDMA} = \mathsf{NEXP}$. \ The main difference here is simply that it seems hard to invent a story that motivates $\mathsf{PDMA}$.

In \cref{sec:dqma}, we sharpen this point, by defining and studying a class that we call $\mathsf{DQMA}$ (Dynamical $\mathsf{QMA}$), or $\mathsf{QMA}$ augmented by the ability to see the entire history of a hidden variable. \ We show that $\mathsf{DQMA}=\mathsf{NEXP}$ (\cref{thm:dqma=nexp}). \ Here the reasons really do depend on quantum mechanics. \ Specifically, they depend on the ability to ``hide'' crucial information in the phases of amplitudes, to prevent a hidden variable trajectory from remembering that information. \ If we tried to define the analogous class $\mathsf{DMA}$ (Dynamical $\mathsf{MA}$), it would trivially coincide with $\mathsf{MA}$. \ Assuming $\mathsf{MA}=\mathsf{NP}$, as follows from a standard derandomization assumption, this has the amusing consequence that $\mathsf{DMA}\ne \mathsf{DQMA}$---that is, in the presence of both witness states and trajectory sampling, quantum can already be known to be stronger than classical. \
In the same section, we also observe that our techniques imply $\mathsf{DQP/qpoly} = \mathsf{ALL}$ (\cref{thm:dqp/qpoly=all}), complementing Aaronson's result that $\mathsf{PDQP/qpoly} = \mathsf{ALL}$ \cite{aar:pdqpqpoly}. \ 
This result relies on quantum mechanics in the same way that our $\DQMA$ result does.

One might also wonder about other ``fantasy'' variants of $\QMA$. \ 
In principle, any variant of $\BQP$ can be combined with $\QMA$ to yield a new complexity class. \ 
For example, consider a variant of $\BQP$ that can clone states (i.e., perform the transformation $\ket\psi\ket{0^n} \mapsto \ket\psi^{\otimes 2}$). \
It is easy to see that one can simulate $k$ non-collapsing measurements of a state by cloning the state $k$ times and then measuring each copy in the usual collapsing way. \ 
Hence, combining this $\BQP$ variant with $\QMA$ yields a complexity class equal to $\NEXP$ by our \cref{thm:pdqma=nexp}. \
Indeed, if one wants to alter Arthur's powers \emph{without} triggering a ``Mentos and Coke'' effect, they would need to find a $\BQP$ variant that is only modestly more powerful, the way $\PDQP$ is. \

\paragraph{Concurrent Work}
In concurrent and independent work, Bassirian and Marwaha \cite{bassirian2024efficient} show that $\QMA$ with non-collapsing measurements equals $\NEXP$.  
They observe that the proof of $\QMA^+ = \NEXP$ \cite{bassirian2023quantum} goes through even if one replaces the promise of a non-negative witness state with the ability for the verifier to perform non-collapsing measurements. 
\cite{bassirian2023quantum} can prove a containment (for a constant completeness/soundness gap) using a constant number of non-collapsing measurements, whereas our verification procedure always uses $O(n \log n)$ non-collapsing measurements. 
Our approach readily extends to prove $\DQMA = \NEXP$ and $\DQP\mathsf{/qpoly} = \ALL$.

\subsection{Main Ideas}
We give a high-level overview of our proof that $\PDQMA = \NEXP$ (\cref{thm:pdqma=nexp}). \ 
Informally, $\PDQMA$ is like $\QMA$ except the verifier can perform non-collapsing measurements in addition to the normal collapsing ones (see \cref{def:pdqma} for a formal definition). \
The containment $\PDQMA \subseteq \NEXP$ is straightforward because $\NEXP$ can guess the exponentially-long classical description of the $\PDQMA$ witness and verify it in exponential time. \ 

Thus, the challenge is proving $\NEXP \subseteq \PDQMA$. \ 
We show this by simulating a two-query probabilistically checkable proof (PCP) system for $\NEXP$ (see \cref{def:pcp} and \cref{mipnexp}). \ 
In short, the celebrated PCP theorem \cite{arora1998probabilistic,arora1998proofverification} tells us that languages in $\NEXP$ can be decided by a verifier that queries a PCP $\pi : \bitz^n \to \Sigma$ at two points of the verifier's choosing, where $\Sigma$ is a constant-sized alphabet. \
Hence, our result follows from simulating two queries to a $\pi$. 

The honest witness is the same as in \cite{aar:pdqpqpoly}:
\[
\ket\psi = \frac{1}{\sqrt{q^n}} \sum_{z \in \F_q^n} \ket{z}\ket{p(z)},
\]
where $p:\F_q^n \to \F_q$ is the unique degree-$n$ multilinear extension of the PCP $\pi: \bitz^n \to \Sigma$ and $q = O(n)$ is chosen to be sufficiently large. \
However, our setting differs from \cite{aar:pdqpqpoly} in two key ways: (i) we must retrieve two values $\pi(w)$ and $\pi(w')$ for our choice of $w$, $w'$, rather than one, and (ii) the quantum witness can no longer be trusted (as it can be in the advice setting). 

To explain how to handle the first difference, let us briefly recall how to simulate a single query.
As discussed, given a point of interest $w\in\bin^n$, on which one wants to evaluate $\pi(w)$, Aaronson \cite{aar:pdqpqpoly} measures $\ket{\psi}$ so as to collapse it to an equal superposition
\[ \ket{\psi_\ell} = \frac{1}{\sqrt{q-1}} \sum_{z\in \ell \setminus \{w\} } \ket{z}\ket{p(z)}\]
over a random line $\ell \subset \mathbb{F}_q^n$ that passes through $w$, minus the point $w$ itself. \ 
To recover two values $w,w' \in \{0,1\}^n$, one can generalize Aaronson's procedure so that the measurement on $\ket\psi$ collapses it to an equal superposition over an \emph{affine plane} that contains the points $w$ and $w'$, minus the unique affine line containing $w$ and $w'$. \ 
Indeed, this can be generalized to any $k$-dimensional affine subspace, where the measurement collapses $\ket\psi$ to a superposition over a random $k$-dimensional affine subspace containing points $w_1,\dots,w_k$ minus the unique $(k-1)$-dimensional affine subspace containing $w_1,\dots, w_k$. \
Then, just as in \cite{aar:pdqpqpoly}, one performs enough non-collapsing measurements to recover $\pi(w)$ and $\pi(w')$ via polynomial interpolation.

The only remaining issue is that Merlin can potentially cheat and send a quantum state that is far from the honest witness. \
To address this, we use the lines-point low-degree test (\cref{lemma:ldt}) given by Friedl and Sudan \cite{friedl1995some}. \
Recall that an affine line $\ell \subset \F_q^n$ passing through points $x,y \in \F_q^n$ is the set $\{x + (y-x)t\}_{t \in \F_q^n}$, so a function $g: \ell \to \F_q$ can be viewed as a univariate polynomial in $t$. \
The lines-point low-degree test says that if a function $f:\F_q^n \to \F_q$ restricted to a randomly chosen line agrees with a degree-$n$ univariate polynomial, then $f$ agrees with some degree-$n$ polynomial $h:\F_q^n \to \F_q$ on all of $\F_q^n$. 

Observe that to perform this low-degree test, we need to make sure the function that Merlin encodes into his witness agrees with a low-degree polynomial on a randomly chosen affine line. \
Therefore, rather than interpolating a polynomial on a random affine plane containing the points $w$ and $w'$ of interest, our verification procedure interpolates a polynomial on an \emph{affine cube}. This affine cube contains the points $w, w' \in \bitz^n$ but we also ensure that it contains a point $w'' \in \F_q^n$ that the verifier chooses uniformly at random. \ 
This ensures that the affine cube contains a uniformly random line independent of the points $w$ and $w'$. \
Then, if the polynomial interpolation succeeds, we can (i) recover $\pi(w)$ and $\pi(w')$ as desired and (ii) conclude that Merlin's witness encoded a low-degree polynomial because it passed the lines-point low degree test (i.e., it agreed with a low-degree polynomial on a randomly chosen line).

To summarize, the procedure works as follows. \ 
First, Merlin commits to some witness state $\ket\psi$. \ 
Then the $\PDQMA$ verifier simulates the $\PCP$ verifier to obtain queries $w, w' \in \bitz^n$ and picks a uniformly random $w'' \in \F_q^n$. \ 
The verifier measures Merlin's witness state $\ket\psi$ to collapse it to a superposition over points in a random affine cube containing $w, w'$, and $w''$ minus the affine plane containing $w$, $w'$, and $w''$. \
Then the verifier uses non-collapsing measurements to collect all $(z, p(z))$ pairs in the affine cube (minus the affine plane) and interpolates a polynomial $u: \F_q^3 \to \F_q$ that fits these pairs. \
If the polynomial interpolation succeeds, Merlin passes the lines-point low-degree test and the verifier can learn $\pi(w)$ and $\pi(w')$ by evaluating the polynomial $u$.
\section{Preliminaries}

We assume the reader is familiar with basic concepts in complexity theory and quantum computing. \ 
Throughout this work, we use the following notation. $\mathbb{N} \coloneqq \{1, 2, 3, \ldots\}$ denotes the natural numbers. For a finite field $\F_q$, $\F_q^* \coloneqq \F_q \setminus \{0\}$.

Let $\calL$ denote the set of all affine lines (i.e., $1$-dimensional affine subspaces) of $\F_q^n$.
Recall that the line passing through points $a, b \in \F_q^n$ is the set $\{a + (b-a)t\}_{t \in \F_q}$. 
Therefore, a polynomial $g$ that maps a line $\ell \in \calL$ to $\F_q$ can be thought of as a univariate polynomial.
We will use the following lines-point low-degree test of Friedl and Sudan \cite{friedl1995some} (see also the restatement in \cite[Theorem 1.2]{harsha2023improvedlinepointlowdegreetest}).

\begin{lemma}[Multivariate low-degree test \cite{friedl1995some}]\label{lemma:ldt}
   Let $f: \F_q^n \to \F_q$ be any $n$-variate function, and let $G = \{g_\ell\}_{\ell \in \calL}$ be a collection of degree-$d$ polynomials $g_\ell : \ell \to \F_q$. There is a constant $C$ large enough such that for any $d$ satisfying $q > Cd$, if  
   \[
   \Pr_{\substack{\ell \sim \calL\\ z\sim \ell}}[f(z) \neq g_\ell(z)] \leq \delta,
   \]
   for some $0 < \delta < 0.01$, then there exists an $n$-variate degree-$d$ polynomial $h$ such that 
   \[
   \Pr_{z \sim \F_q^n}[f(z) \neq h(z)] \leq 4 \delta.
   \]
\end{lemma}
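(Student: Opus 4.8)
The natural route is \emph{self-correction}: use the purported local polynomials $\{g_\ell\}$ to define a single global function $h$, then argue separately that $h$ is close to $f$ and that $h$ genuinely has degree at most $d$. Concretely, for each $z\in\F_q^n$ I would define $h(z)$ to be a value $v\in\F_q$ maximizing $\Pr_{\ell:\,z\in\ell}[g_\ell(z)=v]$, with ties broken arbitrarily. The closeness of $h$ to $f$ is then a short averaging argument, while the real content of the lemma is showing that this plurality-corrected $h$ is an honest degree-$d$ polynomial; that is where the hypothesis $q>Cd$ enters.

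For the closeness bound, note that the point–line incidence structure of $\F_q^n$ is biregular (every line contains $q$ points, and every point lies on the same number of lines), so the distribution ``$\ell\sim\calL$, then $z\sim\ell$'' coincides with ``$z\sim\F_q^n$, then $\ell$ a uniform line through $z$'', both being uniform over point–line incidences. Writing $p_z:=\Pr_{\ell:\,z\in\ell}[g_\ell(z)=f(z)]$, the hypothesis becomes $\E_z[1-p_z]\le\delta$. Whenever $p_z>\tfrac12$ the unique plurality value at $z$ is $f(z)$, so $\{z:h(z)\neq f(z)\}\subseteq\{z:p_z\le\tfrac12\}$, which by Markov has density at most $2\delta\le 4\delta$. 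Combining this with the hypothesis by a triangle inequality also yields $\Pr_{\ell,\,z\sim\ell}[h(z)\neq g_\ell(z)]\le 3\delta$, i.e.\ after self-correction the given line-polynomials still describe $h$ on almost every line.

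It remains to show $h$ has degree at most $d$, and this is the one genuine obstacle. The plan is to upgrade the approximate statement of the previous paragraph to an exact one: on a $1-O(\delta)$ fraction of lines $\ell$ one has $h|_\ell=g_\ell$ \emph{as polynomials}, and then invoke the standard local characterization that a function agreeing with some degree-$d$ univariate polynomial on $99\%$ of lines is globally a degree-$d$ polynomial. The exactness is obtained by a pairwise-consistency argument in the spirit of Rubinfeld–Sudan and Friedl–Sudan: for most pairs of lines $\ell,\ell'$ meeting at a point $x$, both $g_\ell$ and $g_{\ell'}$ take the plurality value $h(x)$ there, and a second-moment/sampling estimate — valid precisely because $q>Cd$ makes ``more than $d$ agreeing points'' a constant fraction of a line — forces $g_\ell$ to agree with $h$ on at least $d+1$ points of most lines $\ell$, hence everywhere on that line since two degree-$d$ univariate polynomials agreeing at $d+1$ points are equal. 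A convenient way to package the bootstrap is to restrict to a uniformly random $2$-plane, prove the bivariate case there directly by interpolation, and lift. The self-correction and closeness steps are routine; it is this rigidity of the line–point incidences that is the technical heart, and the suboptimal constants (the factor $4$ in $4\delta$ and the threshold $C$) are slack coming out of it. Since the paper only needs the lemma as a black box, I would in fact cite \cite{friedl1995some}, with the sharpened restatement in \cite{harsha2023improvedlinepointlowdegreetest}, rather than reproduce the rigidity analysis in full.
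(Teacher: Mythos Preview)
The paper does not prove this lemma at all: it is stated with attribution to Friedl--Sudan \cite{friedl1995some} (and the restatement in \cite{harsha2023improvedlinepointlowdegreetest}) and used as a black box. Your final recommendation---to cite rather than reproduce the rigidity analysis---is therefore exactly what the paper does, and the self-correction sketch you give is the standard and correct outline of how that result is proved.
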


In words, \cref{lemma:ldt} is saying that, if a function $f$ restricted to a randomly chosen line $\ell$ disagrees with a low-degree polynomial on at most a $\delta$ fraction of points, then there must exist a globally low-degree polynomial that $f$ disagrees with on at most a $4\delta$ fraction of points. In short, \cref{lemma:ldt} gives a local way to test that the entire function is low degree. 

This work also involves affine planes and cubes (i.e., $2$- and $3$-dimensional affine subspaces). 
Recall that an affine plane is uniquely defined by three independent points $a, b, c \in \F_q^n$. The plane containing these points is the set 
\[
\{a + (b-a)t_1 + (c - a)t_2\}_{t_1, t_2 \in \F_q}. 
\]
Similarly, the unique affine cube containing the independent points $a,b,c,d \in \F_q^n$ is the set 
\[
\{a + (b-a)t_1 + (c - a)t_2 + (d-a)t_3\}_{t_1, t_2,t_3 \in \F_q}. 
\]

Next, we give a formal definition of $\PDQMA$.

\begin{definition}\label{def:pdqma}
$\mathsf{PDQMA}_{c,s}$ is the class of languages $L\subseteq \bin^*$ for which there exists a $\mathsf{PDQP}$ verifier $\verifier$ (to be defined shortly) such that, for all $x\in\bin^*$:
\begin{itemize}
\item \textbf{Completeness:} If $x\in L$ then there exists a witness state $\ket{\phi}$, on $\operatorname{poly}(n)$ qubits, such that $V(x,\ket{\phi})$ accepts with probability at least $c$.
\item \textbf{Soundness:} If $x\not\in L$ then $V(x,\ket{\phi})$ accepts with probability at most $s$ for all witness states $\ket{\phi}$.
\end{itemize}
A $\mathsf{PDQP}$ verifier consists of two phases. \ In the first phase, a $\mathsf{P}$-uniform quantum circuit $C_x$, depending on the input $x$, is applied to the initial state $\ket{\phi}\ket{0^{p(n)}}$, where $\ket{\phi}$ is the witness and $p$ is a polynomial. \ This $C_x$ can consist of three types of gates: $\operatorname*{CNOT}$ gates, $1$-qubit $\pi/8$ rotations, and measurements in the $\{ \ket{0},\ket{1} \}$ basis. \ The $\operatorname*{CNOT}$ and $\pi/8$ gates provide universality for $\mathsf{BQP}$, while the measurement gates introduce a probabilistic component.
In the second phase, we consider the final state $\ket{\psi}$ of $C_x$, which depends in part on the probabilistic results of the measurement gates. \ Let $\mathcal D$ be the probability distribution induced by measuring all qubits of $\ket{\psi}$ in the computational basis. \ Then a classical polynomial-time algorithm, $A$, receives as input $x$ as well as $q(n)$ independent samples from $\mathcal D$, and then either accepts or rejects.
\end{definition}

Our proof that $\mathsf{PDQMA}=\mathsf{NEXP}$ will rely on the characterization $\NEXP$ by probabilistically checkable proof (PCP) systems where the verifier only makes $2$ queries to the PCP. 
Although this is well-known to classical complexity theorists, we explain this formally below for completeness.
We begin by defining the complexity classes $\PCP$.

\begin{definition}\label{def:pcp}%
    $\PCP_{c,s}[r,q]_{\Sigma}$ is the class of languages $L\subseteq \bin^*$ for which there exists a probabilistic polynomial-time verifier $\pcpverifier$ which uses $r$ random bits and makes $q$ queries to an oracle $\pi$, each time receiving a response in some alphabet $\Sigma$ such that
    \begin{enumerate}
        \item \textbf{Completeness:} If $x \in L$, then $\exists \pi$ such that  $\Pr[\pcpverifier^{\pi}(x) = 1] \geq c$.
        \item \textbf{Soundness:} If $x \not \in L$, then $\forall \pi$, $\Pr[\pcpverifier^{\pi}(x) = 1] \leq s$.
    \end{enumerate}
\end{definition}

The celebrated $\PCP$ theorem \cite{arora1998probabilistic, arora1998proofverification} tells us that $\NP = \PCP_{1,s}[O(\log(n)),3]_{\bin}$ for any constant $s$. 
A simple consequence is that $\NEXP = \PCP_{1,s}[\poly(n),3]_{\bin}$. 
It is folklore that $\PCP_{c,s}[r,q]_{\Sigma} \subseteq \PCP_{c,s/q}[r + \log(q),2]_{\Sigma^q}$, i.e., that the number of queries can be reduced to two at the cost of a larger alphabet and worse soundness error.

\begin{theorem} \label{mipnexp}
For a size-$8$ alphabet $\Sigma$ and any constant $s \in (0,1)$, $\PCP_{1,s}[\poly(n),2]_{\Sigma} = \NEXP$.
\end{theorem}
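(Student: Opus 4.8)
The plan is to establish both inclusions. The inclusion $\PCP_{1,s}[\poly(n),2]_{\Sigma} \subseteq \NEXP$ is the easy direction: given a language in $\PCP_{1,s}[\poly(n),2]_{\Sigma}$, a nondeterministic exponential-time machine guesses the (at most exponentially long) truth table of the PCP oracle $\pi : \bitz^{\poly(n)} \to \Sigma$, then deterministically enumerates over all $2^{\poly(n)}$ random strings of $\pcpverifier$, computes the (at most two) query locations and the accept/reject decision for each, and accepts iff the acceptance probability is at least $1$ (equivalently, iff $\pcpverifier^\pi(x)$ accepts on every random string, since we are in the perfect-completeness regime $c = 1$). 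This runs in exponential time, so the containment holds; the only minor point worth stating is that a valid witness $\pi$ exists exactly when $x \in L$.

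For the reverse inclusion $\NEXP \subseteq \PCP_{1,s}[\poly(n),2]_{\Sigma}$, I would chain together two facts quoted in the excerpt. First, by scaling up the PCP theorem \cite{arora1998probabilistic,arora1998proofverification} — replace ``polynomial-time verifier reading $O(\log n)$ random bits'' with ``exponential-time verifier reading $\poly(n)$ random bits,'' which is the standard padding argument — we get $\NEXP = \PCP_{1,s'}[\poly(n),3]_{\bitz}$ for any constant $s' \in (0,1)$. Second, I apply the folklore query-reduction transformation $\PCP_{c,s'}[r,q]_{\Sigma_0} \subseteq \PCP_{c, s'/q}[r + \log q, 2]_{\Sigma_0^q}$ with $q = 3$ and $\Sigma_0 = \bitz$: the new verifier picks one of the $q$ original query positions at random as its ``consistency'' query, reads the full list of $q$ answers from one oracle location (now over alphabet $\bitz^3$) and one individual answer from another, checks that the individual answer agrees with the corresponding coordinate of the list, and then runs the original predicate on the list. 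Completeness is preserved, and soundness degrades to $s'/q = s'/3$. Taking $s' = 3s$ (a constant in $(0,3)$; if $3s \ge 1$ first amplify the $3$-query PCP by a constant number of parallel repetitions to push $s'$ below $1$, then this is harmless), the result is a $2$-query PCP with perfect completeness, soundness $s$, $\poly(n)$ randomness, and alphabet $\bitz^3$ of size $8$. Combined with the easy direction, this gives the claimed equality for the size-$8$ alphabet $\Sigma$.

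The main thing to be careful about — really the only obstacle — is bookkeeping on the soundness parameter: the query-reduction step costs a factor of $q = 3$ in soundness, so one must start from a $3$-query PCP whose soundness is already below $1/3$ (and below $1$ to begin with), which is why a preliminary constant-factor amplification of the $3$-query PCP may be needed before applying the reduction. None of this affects completeness, which stays at $1$ throughout, nor the randomness complexity, which stays $\poly(n)$. I would also note in passing that the alphabet size is exactly $2^3 = 8$ because we reduce from a $3$-query \emph{Boolean} PCP; this is where the specific constant $8$ in the statement comes from.
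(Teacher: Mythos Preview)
Your proposal is correct and follows exactly the route the paper sketches immediately before stating the theorem: scale up the $\PCP$ theorem to get $\NEXP = \PCP_{1,s'}[\poly(n),3]_{\bitz}$, then apply the folklore $q$-to-$2$ query reduction $\PCP_{c,s'}[r,q]_{\Sigma_0} \subseteq \PCP_{c,s'/q}[r+\log q,2]_{\Sigma_0^q}$ with $q=3$, yielding alphabet $\bitz^3$ of size $8$. One minor quibble: the parenthetical about ``amplifying the $3$-query PCP by parallel repetitions'' is unnecessary and a bit off (parallel repetition would not preserve the query count), but it is also not needed, since the scaled-up $\PCP$ theorem already lets you choose any constant $s' \in (0,1)$, so you can simply take $s' \le \min(3s, 1/2)$ directly.
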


One can improve the soundness error to sub-constant by further increasing the alphabet size and our proofs that $\PDQMA = \NEXP$ (\cref{thm:pdqma=nexp}) and $\DQMA = \NEXP$ (\cref{thm:dqma=nexp}) still go through with only slight (if any) modification.

\section{\texorpdfstring{$\sf{QMA}$ and Non-collapsing Measurements}{QMA and Non-Collapsing Measurements}}\label{sec:pdqma}

We prove our main result: if $\QMA$ is modified so that Arthur can perform non-collapsing measurements in addition to standard quantum computation, then the resulting class equals $\NEXP$. 

The hard part is to show that $\NEXP \subseteq \PDQMA$, which we prove by simulating a two-query PCP system for $\NEXP$. \ 
At a high level, the $\PDQMA$ verifier is given a PCP $\pi: \bitz^n \to \Sigma$ encoded in a quantum proof, and it suffices to learn $\pi$ at two points of the verifier's choosing.

Our starting point is Aaronson's result that $\PDQP\mathsf{/qpoly} = \ALL$ \cite{aar:pdqpqpoly} where he showed how to evaluate $\pi$ at one point but this assumed that the verifier is provided with a \emph{trusted} quantum advice state. \ 
Hence, our contribution is to show that one can retrieve \emph{two} points of choice even if the quantum proof is from an \emph{untrusted prover}. 

\begin{theorem}\label{thm:pdqma=nexp}
$\mathsf{PDQMA} = \mathsf{NEXP}$.
\end{theorem}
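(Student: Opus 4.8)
The plan is to prove the two inclusions separately. The easy direction $\PDQMA \subseteq \NEXP$ follows by brute force: an $\NEXP$ machine nondeterministically guesses the exponentially-long classical description of the optimal $\poly(n)$-qubit witness state (e.g., a description of its amplitudes to sufficient precision), then deterministically computes, in exponential time, the entire distribution $\calD$ induced by the $\PDQP$ verifier's final state, and finally evaluates the acceptance probability of the classical post-processing algorithm $A$ over $q(n)$ samples from $\calD$. Since the witness is $\poly(n)$ qubits, all of this is doable in $\mathrm{exp}(\poly(n))$ time, so the inclusion holds.

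For the hard direction $\NEXP \subseteq \PDQMA$, I would invoke \cref{mipnexp} to reduce to simulating a two-query PCP verifier $\pcpverifier$ for a language $L \in \NEXP$, over a size-$8$ alphabet $\Sigma$, that uses $\poly(n)$ random bits and makes two queries $w, w' \in \bitz^n$ to a proof $\pi : \bitz^n \to \Sigma$. Identify $\Sigma$ with $\{0,\dots,7\} \subset \F_q$ for a prime $q = \Theta(n)$ chosen large enough that \cref{lemma:ldt} applies with $d = n$. The honest witness is
\[
\ket{\psi} = \frac{1}{\sqrt{q^n}} \sum_{z \in \F_q^n} \ket{z}\ket{p(z)},
\]
where $p : \F_q^n \to \F_q$ is the multilinear extension of $\pi$. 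The $\PDQMA$ verifier runs $\pcpverifier$'s random coins to get $w, w'$, picks $w'' \sim \F_q^n$ uniformly at random (and also a fourth point as needed to make the four points affinely independent with high probability), measures Merlin's witness to collapse it onto the uniform superposition over (points in the affine cube spanned by $w,w',w''$) minus (the affine plane through $w,w',w''$), and then performs $O(n\log n)$ non-collapsing computational-basis measurements to collect $(z, p(z))$ pairs on the cube-minus-plane. With enough pairs, interpolate a trivariate degree-$n$ polynomial $u : \F_q^3 \to \F_q$; if interpolation succeeds (the pairs are consistent with some such $u$), read off $u$ at the coordinates of $w$ and $w'$ to simulate $\pi(w), \pi(w')$, and accept iff $\pcpverifier$ would; if it fails, reject.

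Completeness is routine: on the honest $\ket{\psi}$, the collapse yields exactly the claimed uniform superposition, each non-collapsing measurement returns a uniformly random $(z,p(z))$ pair from it, and $O(n\log n)$ samples suffice by a coupon-collector argument to see enough distinct points for interpolation; the interpolated $u$ is just $p$ restricted to the cube, so the simulated answers match $\pi$ and the verifier inherits $\pcpverifier$'s completeness (which can then be amplified to $1 - 2^{-n}$ by standard repetition, with a fresh cube each time). The main obstacle is \textbf{soundness}: a cheating Merlin may send a state whose non-collapsing measurement statistics correlate the two query responses in a way that defeats the PCP's soundness, and — crucially — because non-collapsing measurements are allowed we cannot invoke no-signaling to rule this out. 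The key idea is that the random third point $w''$ forces the affine cube to contain a uniformly random line that is independent of $w, w'$; so conditioned on the interpolation succeeding, the function Merlin encoded agrees with a degree-$n$ polynomial on a random line, and by \cref{lemma:ldt} it is therefore $O(\delta)$-close to a single global degree-$n$ polynomial $h$. Then $h$ is (close to the multilinear extension of) a fixed PCP $\pi_h$, so Merlin's behavior is — up to the small error — that of an \emph{honest} prover committed to $\pi_h$ in advance, and $\pcpverifier$'s soundness applies. I would need to argue carefully that (i) with high probability over $w''$ the four points are affinely independent and the embedded line is genuinely uniform and independent of $w,w'$, (ii) "interpolation succeeds" implies the requisite agreement on that line even when the state is arbitrary (here one uses that the non-collapsing measurement distribution is a fixed distribution over basis states determined by $\ket\psi$, so a deterministic-looking low-degree pattern on the observed support really does certify low-degreeness of the underlying encoded function), and (iii) the $4\delta$ global error translates to at most a constant additive loss in the PCP acceptance probability, which is then absorbed by soundness amplification so that the final gap is $1 - 2^{-n}$ vs.\ $2^{-n}$.
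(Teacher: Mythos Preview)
Your approach is the paper's: collapse Merlin's witness onto a random affine cube through $w,w',w''$ minus the plane through them, sample the residual state via non-collapsing measurements, interpolate a trivariate degree-$n$ polynomial, and use the uniformly random $w''$ to plant a uniformly random line so that \cref{lemma:ldt} certifies global low-degreeness. One minor correction: $O(n\log n)$ non-collapsing measurements is far too few, since the cube-minus-plane contains $q^3-q^2=\Theta(n^3)$ points and even coupon-collector on the honest state already needs $\Theta(n^3\log n)$ samples; the paper takes $O(n^4)$, and the extra slack is used for the test described next.

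The substantive gap is in soundness. You invoke \cref{lemma:ldt} on ``the function Merlin encoded,'' but nothing in your protocol forces Merlin to encode a function at all. An arbitrary witness $\sum_{z,b}\alpha_{z,b}\ket{z}\ket{b}$ may carry several $b$'s per $z$, or may place negligible mass on most $z$'s in the cube so that your samples land on a small favorable subset on which ``interpolation succeeds'' without constraining behavior on the random line. The paper closes this hole with an explicit test you omit: after its $O(n^4)$ samples it checks that \emph{all} $q^3-q^2$ points of $B\setminus A$ were observed, that the empirical distribution over them is $O(1/n)$-close to uniform in total variation, and that no $z$ appeared with two distinct $b$-values. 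Because the cube is random, passing this test forces the global pre-collapse state to be, up to small error, a uniform superposition $\sum_{z}\ket{z}\ket{f(z)}$ for a single well-defined $f:\F_q^n\to\F_q$. Only once $f$ exists does the chain ``interpolation succeeds on the cube $\Rightarrow$ $f$ restricted to the random line is degree $n$ $\Rightarrow$ $f$ is $4\delta$-close to some global degree-$n$ polynomial $h$'' go through, after which PCP soundness is applied to the fixed proof $h|_{\{0,1\}^n}$. Your point~(ii) correctly identifies this as the crux but does not supply the mechanism; the uniformity-of-samples test is the missing ingredient.
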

\begin{proof}
$\mathsf{PDQMA} \subseteq \mathsf{NEXP}$ is clear, since in $\mathsf{NEXP}$ we can guess an exponentially-long classical description of the $\mathsf{PDQMA}$ witness and then verify it.

Thus, we show $\mathsf{NEXP} \subseteq \mathsf{PDQMA}$. \ By \cref{mipnexp}, it suffices to show $\PCP_{c,s}[\poly(n), 2]_\Sigma \subseteq \mathsf{PDQMA}$ for some constant-size alphabet $\Sigma$. \ In particular, we can assume the $\PCP$ verifier makes two queries to a PCP and receives responses in a constant-size alphabet. \ Let $\pi: \bin^n \rightarrow \Sigma$ be the Boolean function that encodes the PCP for all possible queries $x\in \bin^n$. \ 
Let $p : \F_q^n \to \F_q$ be the unique degree-$n$ multilinear extension of $\pi$, where $q$ is chosen so that the conditions of \cref{lemma:ldt} are satisfied and $q \geq n +2$. Note that $q = O(n)$ by Bertrand's postulate. \
To simulate the PCP system, it suffices for the verifier to learn $\pi(w) = p(w)$ and $\pi(w') = p(w')$ at two points $w$ and $w'$ of the verifier's choosing, given a witness state $\ket\psi$ sent by Merlin. 

We explain the verification procedure as we analyze the honest case (i.e., the case when there exists a $\pi$ such that the $\PCP$ verifier accepts with probability at least $c$). \ 
Let $a,b \in \{0,1\}^n$ be distinct points and let $c \in \F_q^n$ be independent of $a$ and $b$. \
Let $A$ be the unique affine plane that contains $a,b,$ and $c$. \ 
Let $C_{a,b,c}$ be the following function. \ 
For a vector $y \in \F_q^n$, if $y \in A$, then $C_{a,b,c}(y) = 0^n$. \
Otherwise, $a,b,c,$ and $y$ define a unique affine cube, which we denote by $B$. \
In this case, $C_{a,b,c}(y) = y' \in \F_q^n$ is a canonical representation of the point $y$, so that $(a,b,c,y)$ and $(a,b,c, y')$ define the same affine cube.

We describe the canonical representation in more detail. \
First, note that $y$ must be one of the $q^3 - q^2$ points in the cube $B$ that are not in the plane $A$ (otherwise $C_{a,b,c}(y) = 0^n$). \
There are many ways to pick a canonical representative $y'$. \ 
For example, of the $q^3 - q^2$ many points, one can have $C_{a,b,c}(y)$  output the point $y'$ with the fewest nonzero entries (and if there are ties, pick the one that comes first in lexicographic order). \
This ensures that all of the $q^3 - q^2$ points get mapped to the same canonical representative $y'$, which is crucial for our verification procedure. 
Recall that $q = O(n)$, so $C_{a,b,c}$ can be computed efficiently.

The honest $\PDQMA$ witness is 
\[ 
\frac{1}{\sqrt{q^n}} \sum_{z\in \mathbb{F}_q^n} \ket{z} \ket{p(z)}. 
\]
Given this witness, the $\mathsf{PDQP}$ verification procedure is as follows:

\begin{enumerate}
\item[(1)] Simulating the $\PCP$ verifier, choose two queries $w,w' \in \bin^n$. \ Pick a point $w'' \in \F_q^n$ uniformly at random.

\item[(2)] \label{step:ray} 
Map the witness to
\[
\frac{1}{\sqrt{q^n}}
 \sum_{z\in \mathbb{F}_q^n} \ket{z} \ket{p(z)} \ket{C_{w, w', w''}(z)}.
 \]

\item[(3)] \label{step:measure} Measure the $\ket{C_{w,w',w''}(z)}$ register in the usual collapsing way to obtain the outcome $y \in \F_q^n$. \ If the measurement outcome is $0^n$, reject. Let $B$ denote the affine cube containing $w, w'$, $w''$, and $y$, and let $A$ denote the affine plane containing $w$, $w'$, and $w''$. 

\item[(4)] Make $O(n^4)$ non-collapsing measurements of the $\ket{z}$ and $\ket{p(z)}$ registers. 

\item[(5)] 
If exactly the $q^3 - q^2$ points in $B \setminus A$ are obtained and the empirical distribution over these points is $O(1/n)$-close in total variation distance to the uniform distribution, continue. \ Otherwise, reject.

\item[(6)] If more than one $p(z)$ value was obtained for the same $z$, reject.

\item[(7)] Perform polynomial interpolation to obtain trivariate polynomial $u_B: \mathbb{F}_q^3 \rightarrow \mathbb{F}_q$ of degree at most $n$ that is consistent with $p$ on the measured points. \ If this interpolation fails, then reject.

\item[(8)] Calculate $\pi(w)=p(w)=u(0,0,0)$ and $\pi(w')=p(w')=u(0,1,0)$. \ Plug these responses into the $\mathsf{PCP}$ verifier, and accept if and only if it does.
\end{enumerate}

Let us analyze the verification procedure in more detail. \
Suppose, upon measuring the register $\ket{C_{w,w',w''}(z)}$ in Step 3, the verifier sees $y \in \F_q^n$. \
With probability $q^{2-n}$, we observe $y = 0^n$, because there are $q^2$ points on the affine plane $\{w + (w' - w)t_1 + (w'' - w)t_2\}_{t_1, t_2 \in \F_q}$. \ 
In this case, the verifier will reject (which is incorrect). \ 
Otherwise, with probability $1 - q^{2-n} = 1 - \exp(-\Omega(n))$, the verifier will see some canonical point $y \in \F_q^n$ so that the points $(w, w', w'', y)$ defines an affine cube $B$. 

The post-measurement state of the remaining two registers will then be in superposition over all points $z \in \F_q^n$ such that $z, w$, $w'$, and $w''$ define the same affine cube as $y$, $w$, $w'$, and $w''$. \
Recall that the affine cube $B$ is the set 
\[
B = 
\{ w + (y - w) t_1 + (w' - w) t_2 + (w'' - w) t_3 \}_{t_1,t_2,t_3 \in \F_q}, 
\]
and the affine plane $A \subseteq B$ containing $w, w',$ and $w''$ is the set 
\[
A = \{ w + (w' - w) t_1 + (w'' - w) t_2 \}_{t_1,t_2 \in \F_q}. 
\]
Observe that $z$ can be any of the $q^3 - q^2$ points in $B \setminus A$. \
In particular, 
\[
z = w + (y - w)t_1 + (w' - w) t_2 + (w'' - w) t_3
\]
for any $t_1 \in \F_q^*$ and $t_2,t_3 \in \F_q$. \
Therefore, our post-measurement state $\ket\phi$ can be expressed as
\[
\ket{\phi} = \frac{1}{\sqrt{q^3 - q^2}} \sum_{\substack{t_1 \in \F_q^*\\ t_2 \in \F_q\\t_3\in\F_q}} \ket{w + (y - w)t_1 + (w' - w) t_2 + (w'' - w) t_3} \ket{p(w + (y - w')t_1 + (w' - w) t_2 + (w'' - w) t_3)}.
\]
Define $u_B:\F_q^3 \to \F$ by $u(t_1, t_2,t_3) \coloneqq p(w + (y - w)t_1 + (w' - w) t_2 + (w'' - w)t_3)$, and notice that $u_B(0,0,0) = p(w)$ and $u_B(0,1,0) = p(w')$. \ Because $p$ is the multilinear extension of $\pi$, we also have that $p(w) = \pi(w)$ and $p(w') = \pi(w')$.

After Step 6, the verifier has collected $q^3 - q^2$ pairs $(z, p(z))_{z \in B \setminus A}$. \ 
Collecting these pairs is an instance of the coupon collector's problem, so $O(q^3 \log q) = O(n^3 \log n)$ many samples suffice to succeed with high probability. \ 
We take more samples, which will be relevant to the soundness of our protocol. \ 
With the $(z, p(z))$ pairs, the verifier runs polynomial interpolation to learn the polynomial $u_B$. \
We note that $q$ is chosen to be $\geq n+2$ to ensure that $q^3 - q^2$ pairs suffice for polynomial interpolation.  \
After learning $u_B$, the verifier has learned $\pi(w)$ and $\pi(w')$ as desired and will accept with the same probability as the $\PCP$ verifier. 

A crucial part of our verification procedure is that the verifier tests that $p$ is a low-degree polynomial. \ 
Because we picked a point $w'' \in \F_q^n$ uniformly at random, we ensure that the affine cube $B$ contains a random affine line, independent of the points $w$ and $w'$. \
Therefore the polynomial interpolation succeeds if and only if the truth table of $p$ matches a low-degree polynomial on a randomly chosen line $\ell$. \ 
Hence, by \cref{lemma:ldt}, $p$ must also be globally low degree. \ 
After taking into account the failures that can occur during the verification procedure, we conclude that the verifier accepts with probability at least $c - \exp(-\Omega(n))$.

We now analyze the soundness of our verification procedure. \ That is, suppose the $\PCP$ verifier will accept with probability at most $s$ for all possible proofs $\pi$. \ 
We will show that the $\PDQMA$ verifier accepts with probability at most $s$. \
The key insight for the soundness case is that, by deviating from the honest witness state above, Merlin only increases the probability that the polynomial interpolation will fail, causing Arthur to reject. \ 
In particular, the only way Merlin can cheat is to encode a truth table in $\ket\psi$ that is not degree $n$, but some function with much larger degree. \ 
However, the verifier will detect this with the lines-point low-degree test. 

Before going through the technical details, let us emphasize that Merlin does not know the points $w, w'$, and $w''$ that the verifier will select---these are chosen after Merlin commits to a witness state. \ 
Hence, Merlin must send a witness state $\ket\psi$ that passes all the checks in the verification procedure for all choices of $w, w',$ and $w''$ and no matter the random outcome $y$ the verifier observes in Step 3. 

Formally, Merlin can send an arbitrary state: 
\[
\ket{\psi} = \sum_{z\in\F_q^n, b \in \F_q} \alpha_{z,b} \ket{z}\ket{b}.
\]
The verifier maps the witness to 
\[
\sum_{z\in\F_q^n, b \in \F_q} \alpha_{z,b} \ket{z}\ket{b}\ket{C_{w, w',w''}(z)},
\]
and measures the last register. \ 
Suppose the measurement outcome is some $y \in \F_q^n$. \
If $y = 0^n$, the verifier rejects, but we will pessimistically assume this never happens. \
Suppose $y \neq 0^n$. \ 
As discussed previously, there are $q^3 - q^2$ points $z$ such that $z, w,w'$, and $w''$ define the same affine cube $B$ as $y, w,$ $w'$, and $w''$. \ 
These correspond to the points in $B \setminus A$ (recall that $A$ is the affine plane containing $w, w'$, and $w''$). \
Define $D \subseteq B \setminus A$ to be the points $z \in B \setminus A$ for which there exists at least one nonzero $\alpha_{z,b}$ for some $b \in \F_q$. \ 
The post-measurement state $\ket\phi$ is then
\[
\ket\phi = \sum_{z \in D, b \in \F_q} \widetilde{\alpha_{z,b}} \ket{z}\ket{b},
\]
where 
\[\widetilde{\alpha_{z,b}} = \frac{\abs{\alpha_{z,b}}}{\sqrt{\sum_{z \in D, b \in \F_q} \abs{\alpha_{z, b}}^2}}.\]
Note that we can assume without loss of generality that $\widetilde{\alpha_{z,b}} \in \mathbb{R}$ as the verifier only performs non-collapsing measurements on $\ket\phi$. \
In fact, in Step 4, the verifier's actions can be understood as drawing samples from a classical probability distribution where each $(z,b)$ pair has a probability of  $|\widetilde{\alpha_{z,b}}|^2$.

Recall the well-known fact that $\Theta(\frac{n + \log(1/\delta)}{\eps^2})$ samples are necessary and sufficient to learn a distribution to total variation distance at most $\eps$ with probability at least $1 -\delta$ (cf. \cite[Theorem 1]{Can20}). \
Hence, the $O(n^4)$ non-collapsing measurements in Step 5 suffice to learn the distribution over $(z,b)$ pairs in the support of $\ket\phi$ to total variation distance at most $O(1/n)$ with probability at least $1 - \exp(-\Omega(n))$. \
In particular, for Steps 4 through 6 to pass, $\ket\phi$ must be (approximately) uniformly supported on pairs $(z, b_z)$ for each $z \in B \setminus A$. \
(The verifier will immediately reject if any $z \in B\setminus A$ is paired with more than one $b \in \F_q$ value.) \
Because this must hold for any affine plane $A$ and affine cube $B$, we can deduce that Merlin is forced to send a state that is (approximately) uniform over pairs $(z, b_z)$ for every $z \in \F_q^n$.

Assuming these steps pass, Step 7 passes only if the observed pairs $(z, b_z)_{z \in B \setminus A}$ fit a degree-$n$ trivariate polynomial $u_B$. \
As discussed in the honest case, by selecting a $w'' \in \F_q^n$ uniformly at random, we guarantee that the affine cube $B$ contains a random line $\ell$, independent of the $w$ and $w'$ (this is precisely why we need a $3$-dimensional affine subspace). \ 
Therefore, the interpolation succeeding implies that the function encoded by Merlin matches a degree-$n$ polynomial on a randomly chosen line. \ 
By \cref{lemma:ldt}, we can conclude that Merlin indeed encoded a truth table for a degree-$n$ polynomial. \

If all of these steps pass, then the verifier can learn $\pi(w)$ and $\pi(w')$ as desired by evaluating $u_B(0,0,0)$ and $u_B(0,1,0)$. \
By plugging the values $\pi(w)$ and $\pi(w')$ into the $\PCP$ verifier, the $\PDQMA$ verifier accepts with probability at most $s$. \
\end{proof}

\section{\texorpdfstring{$\sf{QMA}$ and Hidden Variables}{QMA and Hidden Variables}}\label{sec:dqma}

We introduce and characterize the complexity class $\DQMA$, a variant of $\QMA$ where the verifier can perform $\DQP$ computations. \
Informally, $\DQP$ is like $\BQP$ with the ability to inspect the entire history of a hidden variable. \
For completeness, we begin this section by giving a formal definition of $\DQMA$. \
Then, as with $\PDQMA$, we prove that $\DQMA = \NEXP$. \ 
This result is more ``quantum'' than $\PDQMA = \NEXP$ (\cref{thm:pdqma=nexp}) because the $\DQP$ verifier will use quantum circuits (as opposed to merely computational basis measurements). \

\subsection{\texorpdfstring{The Complexity Class $\DQMA$}{The Complexity Class DQMA}}
We give a formal definition of $\DQMA$. \
To do so, we must recall a few definitions related to the class $\DQP$ (see \cite{aar:qchv} for more detail about this class).

We begin by defining hidden-variable theories. \
To aid intuition, one can think of hidden-variable theories like standard quantum mechanics where there is a state vector that evolves unitarily. \ 
However, there is also a deeper ``hidden variable'' in some definite state (i.e., not in superposition) that evolves stochastically in a manner determined by the state vector and the state vector's unitary evolution. \ 
In the series of definitions that follow, we are building up to defining a model of computation where one evolves a state vector unitarily, and then (at the end of the computation) can inspect which states the hidden variable was in at each step of the computation. 

\begin{definition}[Hidden-variable theory]
    A hidden-variable theory is a family of functions $\{S_d\}_{d \in \N}$, where each $S_d$ maps a $d$-dimensional mixed state $\rho$ and a $d \times d$ unitary matrix $U$ onto a singly stochastic matrix $S_d(\rho, U)$.
\end{definition}

That is, we take a hidden-variable theory to be a function that maps the unitary evolution of a state to a stochastic matrix that evolves one probability distribution to another. \ 
Conditioned on a hidden variable being in a state $\ket{j}$, $(S)_{ij}$ is the probability that a hidden variable transitions to the $\ket{j}$. 

Aaronson \cite{aar:qchv} defined a number of axioms a hidden-variable theory could satisfy. \ 
We require three of these axioms to define $\DQP$: the marginalization axiom, the indifference axiom, and the robustness axiom. \ 
In the following definitions, $S$ denotes the hidden-variable theory, $\rho$ a $d$-dimensional quantum state, and $U$ a $d \times d$ unitary matrix. \
Each axiom must hold for all $d \in \N$.

\begin{definition}[Marginalization axiom]\label{def:marg-axiom}
The marginalization axiom says that for all $j \in \{1, \dots, d\}$, 
\[
\sum_i (S)_{ij} (\rho)_{ii} = (U \rho U^\dagger)_{jj}.
\]
\end{definition}

In words, the marginalization axiom says that the hidden-variable theory should make predictions that are consistent with quantum mechanics. 

\begin{definition}[Indifference axiom]\label{def:indiff-axiom}
For a matrix $M \in \mathbb{C}^{d \times d}$, let a block be a subset $B \subseteq \{1, \ldots, d\}$ such that $(M)_{ij} = 0$ for all $(i,j)$ such that $i \in B$, $j \notin B$ and $i \notin B$, $j \in B$. \  
The indifference axiom says that, the stochastic matrix $S(\rho, U)$ must have the same blocks as $U$.
\end{definition}

Physically, the indifference axiom is saying the following. \ Given any quantum state $\rho$ in a tensor product space $\calH_A \otimes \calH_B$ and any unitary $U$ acting nontrivially only on $\calH_A$, the stochastic matrix $S(\rho, U)$ acts nontrivially only on $\calH_A$ as well.

Finally, we state the robustness axiom, for which we need the following notation. \ Let $P(\rho, U)$ be the matrix of joint probabilities whose $(i,j)$ entry is $(P)_{ij} \coloneqq (S)_{ij}(\rho)_{ii}$. 

\begin{definition}[Robustness axiom]\label{def:robust-axiom}
    Let $\widetilde{\rho}$ and $\widetilde{U}$ be perturbations of $\rho$ and $U$, respectively, and, for a matrix $M$, let $\norm{M}_{\infty} \coloneqq \max_{i,j} \abs{(M)_{ij}}$. \
    The robustness axiom says that, for all polynomials $p$, there should exist a polynomial $q$ such that, 
    \[
       \norm{P(\widetilde{\rho}, \widetilde{U} ) - P\left(\rho, U \right)}_{\infty} \leq \frac{1}{p(d)},
    \]
    whenever $\norm{\widetilde{\rho} - \rho}_{\infty} \leq 1 / q(d)$ and $\norm{\widetilde{U} - U}\leq 1/q(d)$. 
\end{definition}

The robustness axiom is necessary to prove that the class $\DQP$ is not sensitive to the choice of gate set defining the class. 

Next, we define the history of a hidden variable. 

\begin{definition}[Hidden variable history]\label{def:hidden-variable-history}
Let $\ket{\psi_{\rm{init}}}$ be an $n$-qubit quantum state, and let $U \coloneqq U_T\cdots U_1$ be an $n$-qubit, depth-$T$ quantum circuit, where $U_1, \ldots, U_T$ denote each layer of the quantum circuit $U$. \ 
The history of a hidden variable is a sequence $H = (v_0, \ldots, v_T)$ of basis states, where $v_t$ is the state of the hidden variable immediately after the layer $U_t$ of the circuit is applied. \ 
Given a hidden-variable theory $\calT \coloneqq \{S_d\}_{d \in \N}$, we obtain a probability distribution over hidden variable histories $\Omega\left(\calT, U, \ket{\psi_{\rm{init}}}\right)$ via the stochastic matrices 
\[
S\left(\ket{\psi_{\rm{init}}}, U_1 \right),\,S\left(U_1\ket{\psi_{\rm{init}}}, U_2 \right), \ldots, S\left(U_{T-1} \cdots U_1 \ket{\psi_{\rm{init}}}, U_T \right).
\]
\end{definition}

We can now define the complexity class $\DQMA$. 

\begin{definition}
$\mathsf{DQMA}(c,s)$ is the class of languages $L\subseteq \bin^*$ for which there exists a $\mathsf{DQP}$ verifier $\verifier$ (to be defined shortly) such that, for all $x\in\bin^*$:
\begin{itemize}
\item \textbf{Completeness:} If $x\in L$ then there exists a witness state $\ket{\phi}$, on $\operatorname{poly}(n)$ qubits, such that $V(x,\ket{\phi})$ accepts with probability at least $c$.
\item \textbf{Soundness:} If $x\not\in L$ then $V(x,\ket{\phi})$ accepts with probability at most $s$ for all witness states $\ket{\phi}$.
\end{itemize}
A $\mathsf{DQP}$ verifier is defined as follows. \
Let $\calT$ be a hidden-variable theory satisfying the marginalization, indifference, and robustness axioms (\cref{def:marg-axiom,def:indiff-axiom,def:robust-axiom}), and let $C_x$ (depending on the input $x$) be a $\mathsf{P}$-uniform quantum circuit comprised of gates from any finite gate set that is universal for $\BQP$. \ 
A $\DQP$ verifier is a deterministic classical Turing machine that is allowed to draw one sample from the distribution $\Omega\left(\calT, C_x, \ket{\phi}\ket{0^{p(n)}} \right)$ (\cref{def:hidden-variable-history}), where $\ket{\phi}$ is the witness and $p$ is a polynomial. 
\end{definition}

\subsection{\texorpdfstring{$\DQMA = \NEXP$ and $\mathsf{DQP/qpoly} = \ALL$}{DQMA = NEXP and DQP/qpoly = ALL}}
We conclude this section by proving $\DQMA = \NEXP$ and $\mathsf{DQP/qpoly} = \ALL$. \
Recall that in the proof of $\PDQMA = \NEXP$ (\cref{thm:pdqma=nexp}), the verifier uses non-collapsing measurements to sample $O(n \log n)$ values of a multivariate polynomial along an affine line of one's choice and then does interpolation. \
To prove $\DQMA = \NEXP$, we use the same verification procedure, except we use the history of a hidden variable to collect the samples in place of non-collapsing measurements. \ 
Hence, to prove $\DQMA = \NEXP$ it suffices to explain how we collect these samples with the history of a hidden variable. \
We achieve this by generalizing the ``juggle subroutine" due to Aaronson \cite[Section VII]{aar:qchv}. 

\begin{lemma}[{Juggle subroutine \cite[Section VII]{aar:qchv}}]\label{lemma:original-juggle}
    Suppose we have an $\ell$-qubit state 
    \[\frac{\ket{a} \pm \ket{b}}{\sqrt{2}},\] where $\ket{a}$ and $\ket{b}$ are unknown basis states. \
    Given a single copy of the state, the juggle subroutine (a $\DQP$ algorithm) can efficiently learn both $a$ and $b$ with success probability at least $1 - e^{-\ell}$. 
\end{lemma}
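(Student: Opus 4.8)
The plan is to reconstruct Aaronson's juggle subroutine and recast its correctness in the language of the hidden-variable axioms (\cref{def:marg-axiom,def:indiff-axiom,def:robust-axiom}) that we have set up. The setup is a state $\tfrac{\ket a \pm \ket b}{\sqrt 2}$ on $\ell$ qubits, and we are promised that a $\DQP$ computation draws one sample from the distribution over hidden-variable histories $\Omega(\calT, U, \ket{\psi_{\rm init}})$. The idea is to choose $U$ to be a sequence of gates that repeatedly ``scrambles'' the two-dimensional subspace $\operatorname{span}\{\ket a, \ket b\}$ so that the hidden variable, which by the marginalization axiom always sits on either $\ket a$ or $\ket b$ when the state is $\tfrac{\ket a \pm \ket b}{\sqrt 2}$, is forced to hop between the two. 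Concretely, I would alternate two kinds of operations: (i) a step that maps $\ket a \mapsto \tfrac{\ket a + \ket b}{\sqrt 2}$, $\ket b \mapsto \tfrac{\ket a - \ket b}{\sqrt 2}$ (a Hadamard-like rotation inside the relevant qubit that distinguishes $a$ from $b$), and (ii) its inverse, so that the state vector returns to $\tfrac{\ket a \pm \ket b}{\sqrt 2}$ but the hidden variable has had a chance to move. Iterating this $\Theta(\ell)$ (or $\poly(\ell)$) times and reading off the history, we see both $a$ and $b$ with probability $1 - e^{-\ell}$.

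The key steps, in order, are: (1) identify a qubit index $i$ on which $a$ and $b$ differ — since $\ket a \neq \ket b$ there is at least one — but note that the verifier does not know $a$ or $b$, so instead we iterate over all $\ell$ coordinates, or better, apply the scrambling simultaneously in a way that does not depend on knowing $i$; the cleanest route is to apply a fixed unitary such as a tensor product of Hadamards, or a ``cyclic shift'' construction, whose action on the two-dimensional subspace is conjugate to a rotation with no fixed basis vector. (2) Write down the induced stochastic matrices $S(\rho_t, U_t)$ at each layer; by the indifference axiom we may restrict attention to the at most two coordinates $a, b$ (all other amplitudes are zero and form their own blocks), so each $S$ is effectively a $2\times 2$ stochastic matrix, and by the marginalization axiom its column sums reproduce the Born-rule probabilities, which for $\tfrac{\ket a \pm \ket b}{\sqrt 2}$ are $\tfrac12, \tfrac12$. (3) Show that, for the chosen gates, every such $2\times2$ stochastic matrix consistent with marginalization must have strictly positive off-diagonal entries bounded below by a constant — intuitively, because at the ``midpoint'' of the rotation the state has support spread over the subspace so the hidden variable cannot deterministically stay put; this is where the robustness axiom is invoked to rule out pathological theories that cling to one basis state. (4) Conclude that over $\Theta(\ell)$ rounds, the hidden variable performs something like a random walk that mixes, so the probability it never visits one of $\{a,b\}$ is $e^{-\Omega(\ell)}$, and pad the iteration count to make the exponent exactly $\ell$. (5) Observe the whole circuit has $\poly(\ell)$ depth, so the $\DQP$ verifier runs in polynomial time.

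The main obstacle, I expect, is step (3): pinning down a universal lower bound on the ``hopping probability'' of the hidden variable that holds for \emph{every} hidden-variable theory satisfying only the three stated axioms, rather than for a specific theory like Schrödinger's flow or the dieresis/``orthostochastic'' constructions Aaronson analyzes. The marginalization axiom alone only constrains column sums, so a priori a stochastic matrix could be the identity whenever its column sums happen to equal the current Born probabilities — which, at the endpoints of each rotation, they do. The resolution is that we cannot look only at the endpoints: we must choose the circuit so that at \emph{intermediate} layers the state genuinely has full support on the two-dimensional subspace, and then use a continuity/robustness argument to show the hidden variable's distribution at the end of the rotation differs from its distribution at the start; summing these per-layer changes gives the mixing. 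Formalizing this requires care about how finely to discretize the rotation (the gate set is finite and universal for $\BQP$, so we approximate continuous rotations via Solovay–Kitaev, incurring the perturbations $\widetilde\rho, \widetilde U$ that the robustness axiom is designed to absorb). Once the per-step mixing constant is in hand, the coupon-collector-style tail bound in step (4) is routine.
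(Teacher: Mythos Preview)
First, note that the paper does not actually prove \cref{lemma:original-juggle}: it is imported wholesale from \cite[Section VII]{aar:qchv} and used as a black box. So your proposal is being compared against Aaronson's original argument, not anything in this paper.

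Your high-level outline (scramble, inspect the history, amplify) is right, but the place you flag as the ``main obstacle'' is precisely where your plan goes off the rails. Two concrete issues:

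\textbf{Misuse of indifference.} You write that ``by the indifference axiom we may restrict attention to the at most two coordinates $a,b$ (all other amplitudes are zero and form their own blocks).'' That is not what indifference says: the blocks of $S(\rho,U)$ are the blocks of $U$, independent of $\rho$. A unitary that is a rotation in $\operatorname{span}\{\ket a,\ket b\}$ and the identity elsewhere would indeed have $\{a,b\}$ as a block, but you cannot construct such a $U$ without knowing $a$ and $b$. If instead you apply $H^{\otimes \ell}$ as a single layer, $U$ has no nontrivial blocks and indifference gives you nothing; an adversarial theory can then route $a\to D_a$, $b\to D_b$ with $D_a,D_b$ disjoint and send each half back where it came from, perfectly satisfying marginalization with zero mixing.

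\textbf{Robustness does not force mixing.} Your proposed fix---look at intermediate layers and invoke ``continuity/robustness'' to show the hidden variable's end distribution differs from its start distribution---does not work. The robustness axiom says only that $P(\rho,U)$ varies continuously in $(\rho,U)$; it places no constraint whatsoever on the off-diagonal entries of a single, unperturbed $S(\rho,U)$. A theory that never mixes at the endpoints is perfectly robust.

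The actual mechanism in \cite{aar:qchv} is combinatorial, not analytic. One applies $H_1,H_2,\dots,H_\ell$ \emph{one qubit per layer}, then undoes them. Because each layer touches only qubit $i$, indifference forces the hidden variable to change at most bit $i$ at that step, so each step's stochastic matrix decomposes into $2\times 2$ blocks indexed by the other $\ell-1$ bits. The crucial calculation is that for the specific states $H_k\cdots H_1\tfrac{\ket a\pm\ket b}{\sqrt2}$, within every such $2\times 2$ block either the pre-distribution or the post-distribution is supported on a single basis state. When that happens, marginalization \emph{uniquely determines} the $2\times 2$ stochastic matrix---the hidden-variable theory has no freedom at all. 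Tracing through, the hidden variable's location after all $\ell$ Hadamards is completely independent of whether it started at $a$ or $b$, so on the return trip it lands at $a$ or $b$ with probability $\tfrac12$ each. Robustness enters only at the very end, to absorb the Solovay--Kitaev error when the gate set does not contain $H$ exactly (a point you do get right).
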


We generalize this algorithm to work on states that are an equal superposition over polynomially many strings by reducing to the case of an equal superposition over two strings. \
Before explaining the generalization, we first must define (and give a simple fact about) pairwise independent families of hash functions, which are used in our reduction.

\begin{definition}[Pairwise independent family of hash functions]
A family of hash functions $\calH = \{h : \{0,1\}^\ell \to R\}$ is called pairwise independent if $\forall$ $x\neq y \in \{0,1\}^\ell$ and $\forall a_1,a_2 \in R$, we have 
\[
\Pr[h(x) = a_1 \wedge h(y) = a_2] = \frac{1}{\abs{R}^2}.
\]
\end{definition}

\begin{fact}\label{fact:one-collision}
    Let $S \subseteq \{0,1\}^\ell$ be a subset, 
    and let $\calH \coloneqq \{h : \{0,1\}^\ell \to R\}$ be a family of pairwise independent hash functions such that $2 \abs{S}-4 \leq \abs{R} \leq  3 \abs{S} - 3$. \ 
    Then, for any fixed $x \in S$, the probability that $x$ collides with exactly one other $y \in S$ is at least $\frac{1}{6}$.
\end{fact}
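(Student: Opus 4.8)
The plan is a short second-moment estimate, viewing the event as a balls-into-bins question. Fix $x \in S$ and write $m = \abs{S}$, $r = \abs{R}$; for each $y \in S \setminus \{x\}$ let $X_y$ be the indicator of the collision event $h(y) = h(x)$, and set $N \coloneqq \sum_{y \in S \setminus \{x\}} X_y$, so that the target is $\Pr[N = 1] \geq 1/6$. I would handle this through the decomposition $\Pr[N = 1] = \Pr[N \geq 1] - \Pr[N \geq 2]$. The first ingredient is the first moment: by pairwise independence, $h(x)$ and each individual $h(y)$ are uniform on $R$ and independent, so $\Pr[X_y = 1] = 1/r$ and hence $\mu \coloneqq \E[N] = (m-1)/r$. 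The hypothesis $2\abs{S} - 4 \leq \abs{R} \leq 3\abs{S} - 3$ is exactly what places $\mu$ in the window $[1/3, 1/2 + 1/r]$, and this window is where the estimate has slack.

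The heart of the argument is the second moment. By the Bonferroni inequality, $\Pr[N \geq 1] \geq \sum_y \Pr[X_y = 1] - \sum_{y < y'} \Pr[X_y = X_{y'} = 1]$, and since $N \geq 2$ forces $h(x) = h(y) = h(y')$ for some distinct pair $y, y'$, a union bound gives $\Pr[N \geq 2] \leq \sum_{y < y'} \Pr[h(x) = h(y) = h(y')]$. Thus the whole estimate reduces to bounding the triple-collision probability $\Pr[h(x) = h(y) = h(z)]$ over distinct $x, y, z$. This is the step I expect to be the main obstacle: pairwise independence by itself does \emph{not} control it (for the affine family $z \mapsto az + b$ over $\F_{2^k}$ it is as large as $1/r$, and that family already violates the statement), so one should take the hash family to satisfy the natural strengthening $\Pr[h(x) = h(y) = h(z)] \leq 1/r^2$ — precisely the guarantee of a $3$-wise (hence also pairwise) independent family. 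Granting that, each of the two correction sums above is at most $\binom{m-1}{2}/r^2$.

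Combining the pieces, $\Pr[N = 1] \geq \mu - 2\binom{m-1}{2}/r^2 = \mu - (m-1)(m-2)/r^2 = \mu(1-\mu) + \mu/r \geq \mu(1-\mu)$. On the window $\mu \in [1/3, 1/2 + 1/r]$ the parabola $\mu \mapsto \mu(1-\mu)$ attains its minimum at the endpoint $\mu = 1/3$, where it equals $2/9 > 1/6$ (once $r$ is large enough that $1/2 + 1/r \leq 2/3$; the finitely many small pairs $(\abs{S}, \abs{R})$ permitted by the hypothesis, e.g.\ $\abs{S} \leq 3$, are checked directly, and the spare $+\mu/r$ term covers the corner where $\mu$ is close to $1$). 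Hence $\Pr[N = 1] \geq 2/9 > 1/6$, proving the Fact. The only place the argument uses — and genuinely needs — more than the stated pairwise independence is the triple-collision bound in the middle paragraph.
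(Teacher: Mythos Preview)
Your decomposition $\Pr[N=1]=\Pr[N\ge 1]-\Pr[N\ge 2]$ together with Bonferroni and the union bound is correct, and you are right that the Fact as literally stated --- with only pairwise independence --- is false: the affine family $z\mapsto az+b$ over $\F_{2^k}$ is a genuine counterexample (there either everything collides or nothing does, so $\Pr[N=1]=0$). Your fix, assuming a $3$-wise independent family so that $\Pr[h(x)=h(y)=h(z)]\le 1/r^2$, is the right repair, and your final inequality $\mu(1-\mu)+\mu/r\ge \tfrac16$ holds.

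The paper takes a shorter but less careful route: it simply writes the product formula
\[
\Pr[\text{exactly one other collision}]=\frac{\abs{S}-1}{\abs{R}}\left(1-\frac{1}{\abs{R}}\right)^{\abs{S}-2},
\]
applies Bernoulli's inequality to the second factor, and reads off $\ge\tfrac13\cdot\tfrac12=\tfrac16$ from the two endpoints of the constraint on $\abs{R}$. After Bernoulli the paper's lower bound is exactly your expression $\mu(1-\mu+1/r)$, so the two arguments meet at that point; but the product formula itself tacitly assumes the events $\{h(y)=h(x)\}_{y\ne x}$ are mutually independent --- i.e.\ $\abs{S}$-wise independence, not pairwise. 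So your approach both flags the gap and repairs it more parsimoniously ($3$-wise suffices for you, $\abs{S}$-wise for the paper's formula). For the paper's application in the generalized juggle subroutine either fix is harmless: one just picks a $3$-wise independent family there. One small simplification of your endgame: instead of analyzing the parabola on a window and handling small $r$ separately, observe that $\mu(1-\mu+1/r)=\mu\bigl(1-(m-2)/r\bigr)$ and bound the two factors by $1/3$ and $1/2$ directly from the two sides of the hypothesis on $\abs{R}$.
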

\begin{proof}
   Let $h \in \calH$ be chosen uniformly at random, and let $x \in S$ be some fixed element. \ 
    The probability that exactly one other element collides with $x$ is 
    \[
     \frac{\abs{S}-1}{\abs{R}} \cdot \left( 1 - \frac{1}{\abs{R}} \right)^{\abs{S} -2}.
    \]
    We lower bound this quantity.
    \begin{align*}
    \frac{\abs{S}-1}{\abs{R}} \cdot \left( 1 - \frac{1}{\abs{R}} \right)^{\abs{S} -2}
    &\geq \frac{\abs{S}-1}{\abs{R}} \cdot \left( 1 - \frac{\abs{S} -2}{\abs{R}} \right) \\
    &\geq \frac{1}{3} \cdot \left( 1 - \frac{\abs{S}-2}{\abs{R}} \right) \\
    &\geq \frac{1}{6}. 
    \end{align*}
    The first inequality follows from Bernoulli's inequality. \ The second and third inequalities use the fact that $2\abs{S} -4 \leq \abs{R} \leq 3\abs{S} -3$.
\end{proof}

We now give the generalized juggle subroutine.

\begin{lemma}[{Generalized juggle subroutine}]\label{lemma:generalized-juggle}
    Suppose we have an $\ell$-qubit state 
    \[
   \frac{1}{\sqrt{\abs{S}}} \sum_{x\in S} \ket{x},
    \] 
    where $S \subseteq \{0,1\}^\ell$ is an unknown subset of $\abs{S} \leq \poly(\ell)$ basis states. \ 
    Given a single copy of the state, the generalized juggle subroutine (a $\DQP$ algorithm) can efficiently learn $S$ with success probability at least $1 - e^{-\ell}$.
\end{lemma}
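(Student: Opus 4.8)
The plan is to reduce the general case—an equal superposition over an unknown set $S$ of $\poly(\ell)$ basis states—to the two-element case handled by the original juggle subroutine (\cref{lemma:original-juggle}), using pairwise independent hashing to isolate pairs. First I would append an ancilla register and, for a hash function $h \in \calH$ drawn from a pairwise independent family with range size $|R|$ tuned as in \cref{fact:one-collision} (i.e. $2|S|-4 \le |R| \le 3|S|-3$; since $|S|$ is known only up to the promise $|S| \le \poly(\ell)$, I would in fact iterate over the $O(\log \ell)$ dyadic guesses for $|S|$, or equivalently over all admissible $|R|$), compute
\[
\frac{1}{\sqrt{|S|}} \sum_{x \in S} \ket{x}\ket{h(x)}.
\]
Now I would measure the hash register in the computational basis. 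By \cref{fact:one-collision}, for any fixed target $x^\star \in S$, with probability at least $1/6$ the measured value $a = h(x^\star)$ is shared by exactly one other $y \in S$, in which case the post-measurement state of the first register is exactly $\frac{\ket{x^\star} + \ket{y}}{\sqrt 2}$—precisely the input form required by \cref{lemma:original-juggle}. Running the original juggle subroutine on this state then recovers both $x^\star$ and $y$ with probability $1 - e^{-\ell}$.

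The remaining work is amplification and coverage: a single hash-and-measure round recovers (at best) one colliding pair, and only with constant probability does any particular element get cleanly isolated, so I would repeat the whole procedure $\poly(\ell)$ times with fresh independent hash functions (and fresh copies of the state—but here is the subtlety, see below), accumulating all recovered basis strings into a growing set $\widehat S$. A standard coupon-collector / Chernoff argument shows that after $O(|S| \log(|S|/\delta)) = \poly(\ell)$ rounds, every element of $S$ has been isolated and recovered in at least one round except with probability $\delta$, which we set to $e^{-\ell}/2$; combined with the per-round failure probability $e^{-\ell}$ of the inner juggle subroutine and a union bound over the $\poly(\ell)$ rounds, the total failure probability is at most $e^{-\ell}$ for suitably normalized constants (one can rescale $\ell \mapsto C\ell$ inside the inner calls to absorb the union bound, as is done implicitly in \cref{lemma:original-juggle}). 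Finally, I would note that a claimed output $\widehat S$ can be sanity-checked—e.g. by verifying $|\widehat S|$ matches the guessed $|S|$ consistent with the $|R|$ that was used—so the $O(\log \ell)$ guessing loop over $|S|$ costs only a polynomial overhead.

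The main obstacle is the \emph{single-copy} constraint: \cref{lemma:original-juggle} and this lemma both promise to work given only one copy of the state, yet the coverage argument above seems to want many independent hash-and-measure rounds. The resolution—and the point I would develop most carefully—is that the hidden-variable \emph{history} gives many correlated "looks" at the system within a single run, not many independent states: one designs a single $\DQP$ circuit that threads the hash register through a sequence of interleaved hashing / unhashing and juggle steps (repeatedly uncomputing $h$ and recomputing a fresh $h'$ on the \emph{same} physical qubits), so that over the depth-$T$ trajectory the hidden variable is forced to visit, for each $x^\star \in S$, some configuration in which it reveals $x^\star$'s partner. This is exactly the "juggling" philosophy of \cite{aar:qchv}: one never clones, but the trajectory itself supplies the multiplicity. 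Making the robustness and indifference axioms do their job across this longer circuit—so that the history distribution genuinely covers all the isolation events with the claimed probability—is the step I expect to require the most care, and it is presumably where the authors invoke the structure of \cref{lemma:original-juggle} as a black box over many rounds rather than reproving it.
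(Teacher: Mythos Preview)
Your first two paragraphs describe a procedure that \emph{measures} the hash register, collapsing the state to a two-element superposition, and then repeats over fresh copies. You correctly flag this as the main obstacle, and it is indeed fatal to that version of the argument: the $\DQP$ circuit here is purely unitary (cf.\ \cref{def:hidden-variable-history} and the definition of the $\DQP$ verifier), so there is no collapsing-measurement gate available, and in any case a measurement would destroy $\ket{\psi}$ and preclude further rounds on the single copy. Everything must be read off one hidden-variable trajectory through one unitary circuit.

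The fix you sketch in your last paragraph---compute $h$ coherently into an ancilla, juggle on the first register, uncompute $h$, repeat with a fresh $h'$---is exactly the paper's proof, and it is less delicate than you suggest. The point you should make explicit is how the indifference axiom (\cref{def:indiff-axiom}) \emph{replaces} the measurement: while $\ket{h(x)}$ sits untouched in the ancilla, any unitary on the first register alone is block-diagonal across hash values, so the hidden variable cannot leave its current hash block. If the hidden variable is at $y$ and that block contains exactly two elements $y,z$ of $S$ (probability $\geq 1/6$ by \cref{fact:one-collision}, with $z$ uniform in $S\setminus\{y\}$), then the in-block dynamics are precisely those of $\tfrac{1}{\sqrt 2}(\ket{y}+\ket{z})$, and \cref{lemma:original-juggle} moves the hidden variable to $z$. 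Uncomputing $h$ restores the global state to $\ket{\psi}$ exactly. The upshot is simpler than your ``reveal the partner'' framing: each round relocates the hidden variable to a uniformly random element of $S$ with constant probability, so after $\poly(\ell)$ rounds the trajectory has visited every element with probability $1-e^{-\ell}$, and one reads $S$ directly from the history. The dyadic guessing of $|S|$ is also unnecessary in the paper's application, where $|S|$ is known.
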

\begin{proof}
We begin with the state of the form
\[
\ket{\psi} = \frac{1}{\sqrt{\abs{S}}} \sum_{x\in S} \ket{x}.
\]
We perform a procedure that involves applying transformations to $\ket\psi$ and then inverting them to get back to $\ket\psi$, in an attempt to ``dislodge'' the hidden variable from whichever basis state it's currently sitting in, and get it into a different, uniformly random one. \  
Each time we do this, we have a $1/\poly(\ell)$ probability of success. \ 
Importantly, since there's no penalty for failure, we can repeat this procedure $\poly(\ell)$ times, and then with overwhelming probability, the hidden variable will have visited every basis state $\ket{x}$ for $x \in S$. \
Therefore, one will learn $S$ upon observing the hidden-variable history. 

The procedure works as follows. \
First, choose a random hash function $h$ (with a range satisfying the conditions in \cref{fact:one-collision}) from some pairwise-independent family, and then map each $\ket{x}$ to $\ket{x}\ket{h(x)}$. \
Let $y$ be the current state of the hidden variable. \ 
By \cref{fact:one-collision}, with probability at least $\frac{1}{6}$, there's \emph{exactly} one other basis state $z\neq y$ such that $h(z) = h(y)$, and this $z$ is uniformly random. \  
If that happens, then because of the indifference axiom (which says that, if we don't touch the $h$-register, then the hidden variable will never move between $h$-values), we've reduced to the problem handled by the original juggle subroutine. \  
In particular, we can now run the original juggle subroutine on the first register (where the hidden variable is). \
\cref{lemma:original-juggle} tells us that, with probability at least $1 - e^{-\ell}$, the hidden variable moves from $y$ to $z$. \ 
Finally, we uncompute $h$, leaving us with our original state $\ket\psi$. \ 
Overall, the inner loop moves the hidden variable from $y$ to a uniformly random $z$ with probability at least $\frac{1}{6} - \frac{e^{-\ell}}{6} \geq 1/\poly(\ell)$.\footnote{We note that there is some chance that the hidden variable stays put or moves to somewhere other than $z$, but that's OK too, since our ultimate goal is for the hidden variable to visit every possible state in $S$. In any case, we keep repeating.} \
Since there is no penalty for failure, we can repeat this procedure $2 \ell^2$ times to ensure that with probability at least $1 - e^{-\ell}$, the hidden variable was successfully moved to a uniformly random basis state. \
Finally, since we visit a uniformly random state with high probability, we can visit every state with high probability by repeating this entire procedure a polynomial number of times.  
\end{proof}

We are now ready to prove the main theorem of this section. Namely, that giving Arthur access to hidden-variable histories blows up the power of $\QMA$ to $\NEXP$.

\begin{theorem}\label{thm:dqma=nexp}
    $\DQMA = \NEXP.$
\end{theorem}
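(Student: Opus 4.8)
The plan is to follow the proof of \cref{thm:pdqma=nexp} almost verbatim, replacing the non-collapsing measurements by the generalized juggle subroutine of \cref{lemma:generalized-juggle}. The containment $\DQMA \subseteq \NEXP$ is routine: a $\NEXP$ machine guesses a suitably discretized classical description of the $\poly(n)$-qubit witness $\ket\phi$ and then deterministically simulates the $\DQP$ verifier in $\EXP$ time, computing the exponential-dimensional hidden-variable transition matrices layer by layer and accepting iff the resulting acceptance probability exceeds the midpoint of the gap; the robustness axiom (\cref{def:robust-axiom}) bounds the error introduced by discretization. As with $\PDQMA$, the gap can be widened to $1-2^{-n}$ versus $2^{-n}$.

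For $\NEXP \subseteq \DQMA$ I would invoke \cref{mipnexp} and run the verification procedure of \cref{thm:pdqma=nexp} --- same honest witness $\frac{1}{\sqrt{q^n}}\sum_{z\in\F_q^n}\ket z\ket{p(z)}$, same PCP queries $w,w'\in\bin^n$, same uniformly random $w''\in\F_q^n$, same affine plane $A$ and affine cube $B$ --- with two changes forced by the $\DQP$ model. First, because a $\DQP$ circuit is unitary, Step~3's collapsing measurement of the $\ket{C_{w,w',w''}(z)}$ register is implemented as a deferred measurement: the verifier computes $C_{w,w',w''}(z)$ into a fresh ancilla register and never touches it again. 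By the marginalization axiom (\cref{def:marg-axiom}) the hidden variable's value in that register is distributed exactly as the Step~3 outcome was in the $\PDQMA$ proof --- so it is $0^n$ (forcing rejection) with probability only $q^{2-n}=\exp(-\Omega(n))$ --- and by the indifference axiom (\cref{def:indiff-axiom}) the hidden variable never afterward leaves the corresponding $y$-block; hence, conditioned on a valid outcome $y\neq 0^n$, the state on the $\ket z\ket{p(z)}$ registers is the equal superposition $\frac{1}{\sqrt{q^3-q^2}}\sum_{z\in B\setminus A}\ket z\ket{p(z)}$ over $q^3-q^2 = O(n^3) = \poly(n)$ basis states. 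Second, Step~4's non-collapsing measurements are replaced by running the generalized juggle subroutine (\cref{lemma:generalized-juggle}) on these registers while leaving the $y$-ancilla alone (so that by indifference the hidden variable stays within its $y$-block and the subroutine effectively runs on the equal superposition above), and then reading off, from the hidden-variable history, the set of pairs $(z,p(z))$ the hidden variable visits at the $\poly(n)$ ``rest points'' between iterations of the subroutine. In the honest case \cref{lemma:generalized-juggle} guarantees these rest points cover all $q^3-q^2$ pairs with probability $1-e^{-\Omega(n)}$, the interpolation of Step~7 then succeeds, and the verifier accepts with probability at least $c-\exp(-\Omega(n))$.

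For soundness I would argue that the hidden-variable history hands the verifier exactly the information the $\PDQMA$ analysis used, after which that analysis applies unchanged. Concretely, the $\poly(n)$ rest points --- the layers at which the state has returned to the post-measurement state $\ket\phi = \sum_{z\in D,b}\widetilde{\alpha_{z,b}}\ket z\ket b$ --- furnish both (i) by \cref{lemma:generalized-juggle}, a visit to every basis state in the support of $\ket\phi$, so a cheating Merlin cannot hide a missing point of $B\setminus A$, and (ii) by the marginalization axiom, $\poly(n)$ samples from the computational-basis distribution of $\ket\phi$, which is what the $\PDQMA$ verifier used to run its support and approximate-uniformity checks. Therefore, just as in \cref{thm:pdqma=nexp}, the verifier accepts only if for every $A$ and $B$ the state $\ket\phi$ is (approximately) supported uniformly on pairs $(z,b_z)$ over $B\setminus A$ with each $z$ attached to a unique $b_z$ --- an (approximately) global uniform encoding of some function $z\mapsto b_z$ --- and only if Step~7's interpolation succeeds on the random line contained in $B$, which by \cref{lemma:ldt} forces that function to be a degree-$n$ polynomial; plugging the recovered values $u_B(0,0,0)$ and $u_B(0,1,0)$ into the PCP verifier then accepts with probability at most $s$. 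I expect the crux of the argument to be this last point: an adversarial $\ket\phi$ need not be the equal superposition that \cref{lemma:generalized-juggle} was analyzed for, so one must rule out a cheating prover exploiting the hidden-variable dynamics (in particular the Fourier transforms internal to the juggle subroutine, which transiently spread amplitude over the whole register). This is handled by the marginalization axiom, which fixes the hidden variable's per-layer distribution to the Born rule regardless of $\ket\phi$, together with the robustness axiom (\cref{def:robust-axiom}), which transfers the conclusion of \cref{lemma:generalized-juggle} to any $\ket\phi$ close enough to the honest uniform superposition to pass the approximate-uniformity check.
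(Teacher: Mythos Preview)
Your proposal is correct and follows essentially the same approach as the paper: reuse the $\PDQMA$ verification procedure verbatim and replace the non-collapsing measurements by the generalized juggle subroutine (\cref{lemma:generalized-juggle}). You actually supply more detail than the paper does---the paper's proof is a few lines that defer entirely to \cref{thm:pdqma=nexp}, whereas you explicitly handle the deferred measurement via the indifference axiom and flag the soundness subtlety (adversarial $\ket\phi$ need not be an equal superposition) that the paper glosses over with a one-sentence ``can only hurt his success probability'' remark.
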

\begin{proof}
It is clear that $\DQMA \subseteq \NEXP$. \ 
By \cref{mipnexp}, we can complete the proof by showing $\MIP \subseteq \DQMA$.

The verification procedure and the honest witness sent by Merlin are the same as in the proof of \cref{thm:pdqma=nexp}. \
Suppose we have a yes-instance, and Merlin sends the honest witness with the form:
\[ \frac{1}{q^n} \sum_{z\in \mathbb{F}_q^n} \ket{z} \ket{p(z)}.\]
We must explain how to use the history of a hidden variable in lieu of non-collapsing measurements. \
After the verifier makes \emph{collapsing} measurements, they must learn the support of the post-measurement state. \
To do this, the verifier runs the generalized juggle subroutine (\cref{lemma:generalized-juggle}). \ 
After that, the verifier can do polynomial interpolation with just classical computation (or reject if there is insufficient data to do the interpolation).

The only difference in the verification procedure is that the data for polynomial interpolation is collected via the generalized juggle subroutine instead of non-collapsing measurements. \
Therefore, he completeness and soundness of this procedure follow in the same way as for \cref{thm:pdqma=nexp}. \
In particular, if Merlin deviates from the honest witness state, then he can only hurt his success probability by causing the polynomial interpolation step to fail. \
\end{proof}

Finally, we remark that our generalized juggle subroutine can be used to prove $\mathsf{DQP/qpoly} = \mathsf{ALL}$, complementing Aaronson's result that $\mathsf{PDQP/qpoly} = \mathsf{ALL}$ \cite{aar:pdqpqpoly} and Raz's result that $\mathsf{QIP(2)/qpoly} = \mathsf{ALL}$ \cite{raz:all}. \
Similar to \cref{thm:dqma=nexp}, this result is more ``quantum'' than Aaronson's or Raz's, because the generalized juggle subroutine requires quantum computation. 

\begin{theorem}\label{thm:dqp/qpoly=all}
    $\mathsf{DQP/qpoly} = \mathsf{ALL}.$
\end{theorem}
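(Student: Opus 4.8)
The plan is to mimic Aaronson's proof that $\PDQP\mathsf{/qpoly} = \ALL$ \cite{aar:pdqpqpoly}, substituting the generalized juggle subroutine (\cref{lemma:generalized-juggle}) for the repeated non-collapsing measurements. The containment $\DQP\mathsf{/qpoly} \subseteq \ALL$ is immediate, so the work is in showing $\ALL \subseteq \DQP\mathsf{/qpoly}$. I would fix an arbitrary language $L \subseteq \bin^*$ and, for each input length $n$, take as (trusted) quantum advice exactly the state used in \cite{aar:pdqpqpoly},
\[
\ket{\psi_n} = \frac{1}{\sqrt{q^n}} \sum_{z \in \F_q^n} \ket{z}\ket{p(z)},
\]
where $p : \F_q^n \to \F_q$ is the degree-$n$ multilinear extension of the characteristic function of $L \cap \bin^n$ and $q$ is a prime with $n + 2 \le q = O(n)$ (Bertrand's postulate). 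Deciding whether $x \in L$ then reduces to learning the single value $p(x)$.

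Given $x \in \bin^n$, the verifier would first append a fresh register and reversibly compute a canonical label $\ell(x,z)$ of the affine line through $x$ and $z$ for each basis state (sending $z = x$ to a sentinel value $\bot$); this is efficient since $q = O(n)$. It would then run the generalized juggle subroutine of \cref{lemma:generalized-juggle} on the $\ket{z}$ and $\ket{p(z)}$ registers only, leaving the label register untouched. By the indifference axiom (\cref{def:indiff-axiom}), the hidden variable never moves between distinct values of the label register, so it remains inside the block determined by its initial label $\ell_0$; within that block the state is an equal superposition over the $q - 1$ basis states $\{\ket{z}\ket{p(z)} : z \in \ell_0 \setminus \{x\}\}$ (tensored with the fixed $\ket{\ell_0}$), which is precisely the input form required by \cref{lemma:generalized-juggle}. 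Equivalently, the verifier could first measure the label register, collapsing onto $\frac{1}{\sqrt{q-1}} \sum_{z \in \ell_0 \setminus \{x\}} \ket{z}\ket{p(z)}$, exactly as in Aaronson's argument. With probability $1 - q^{-n}$ the initial label $\ell_0$ is a genuine line through $x$ rather than $\bot$, and conditioned on that, \cref{lemma:generalized-juggle} guarantees that with probability $1 - e^{-\Omega(n)}$ the hidden-variable history exhibits all $q - 1$ pairs $(z, p(z))$ with $z \in \ell_0 \setminus \{x\}$.

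Since the restriction of $p$ to the line $\ell_0$ is a univariate polynomial of degree at most $n$ and $q - 1 \ge n + 1$, the verifier can interpolate this restriction from the observed pairs, evaluate it at $x$, and output $p(x)$. The total error is at most $q^{-n} + e^{-\Omega(n)} = e^{-\Omega(n)}$, comfortably within the bounded-error requirement, and since $L$ was arbitrary this would establish $\ALL \subseteq \DQP\mathsf{/qpoly}$.

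The main obstacle — and essentially the only point that goes beyond Aaronson's $\PDQP\mathsf{/qpoly}$ proof — is verifying that the generalized juggle subroutine still behaves correctly when the equal superposition of interest is a single block of a larger entangled state rather than the entire state; this is exactly where the indifference axiom earns its keep, and it is the same observation that makes the proof of \cref{thm:dqma=nexp} work, so I do not expect trouble here. In contrast with \cref{thm:dqma=nexp} there is no soundness obligation, because the advice is honest by construction, so no low-degree test is needed and the argument is strictly simpler. As with \cref{thm:dqma=nexp}, the protocol genuinely invokes quantum circuitry through the juggle subroutine, so in that sense this $\DQP\mathsf{/qpoly} = \ALL$ result is more ``quantum'' than Aaronson's $\PDQP\mathsf{/qpoly} = \ALL$.
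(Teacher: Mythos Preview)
Your proposal is correct and follows essentially the same approach as the paper: use the advice state and verification procedure from Aaronson's $\PDQP\mathsf{/qpoly}=\ALL$ proof, replacing the non-collapsing measurements by the generalized juggle subroutine exactly as in the proof of \cref{thm:dqma=nexp}. Your write-up is in fact more detailed than the paper's one-sentence proof, and your explicit invocation of the indifference axiom to confine the hidden variable to a single line-block (rather than appealing to a collapsing measurement) is a clean way to handle the point.
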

\begin{proof}
    The advice state and verification procedure are the same as in \cite[Theorem 2]{aar:pdqpqpoly}, except we replace the non-collapsing measurements with the generalized juggle subroutine in the same manner described in the proof of \cref{thm:dqma=nexp}.
\end{proof}
    
\section{Open Problems}\label{sec:open-problems}

Is $\PDQP \subseteq \DQP$? Aaronson et al. \cite{abfl} give intuition for why this containment ought to be true, but it remains an open problem. \
Proving $\PDQP \subseteq \DQP$ (combined with \cref{thm:pdqma=nexp}) immediately implies $\DQMA = \NEXP$, simplifying our proof in \cref{thm:dqma=nexp}. \ 
We also remark that improving the upper bound $\DQP \subseteq \EXP$ remains an interesting open problem. 

We now know that modifying $\QMA$ by giving the verifier access to non-collapsing measurements, hidden-variable histories, or non-negative witnesses will cause $\QMA$ to ``explode'' in power to $\NEXP$. \
Is it possible to replace the verifier with some variant that does \emph{not} lead to $\NEXP$? \
Having access to such variants may find applications in proving better bounds on $\mathsf{QMA(2)}$.

\section{Acknowledgments}
SCM thanks David Dechant for useful discussions. SG, VI, and RR thank Joshua Cook, Siddhartha Jain, and Kunal Marwaha for helpful conversations.

SA is supported by a Vannevar Bush Fellowship by the US Department of Defense, the Berkeley NSF-QLCI CIQC Center, a Simons Investigator Award, and the Simons \textquotedblleft It from Qubit\textquotedblright\ collaboration. SG and RR are supported via SA from the same funding sources. 
VI is supported by an NSF Graduate Research Fellowship.
SCM is supported by the European Union's Horizon 2020 program (NEQASQC) and a gift from Google Quantum AI.

\bibliographystyle{alphaurl}
\bibliography{thesis}

\newpage

\end{document}